\newtheorem{theorem}{Theorem}
\newtheorem{lemma}{Lemma}
\newtheorem{definition}{Definition}
\newtheorem{remark}{Remark}
\newtheorem{example}{Example}
\newtheorem{assumption}{Assumption}
\begin{document}

\begin{frontmatter}

\title{A graph-based spatial temporal logic for knowledge representation and automated reasoning 
\tnoteref{mytitlenote}}
\tnotetext[mytitlenote]{The partial support of the National Science Foundation (Grant No. CNS-1446288, ECCS-1253488, IIS-1724070) and of the Army Research Laboratory (Grant No. W911NF- 17-1-0072) is gratefully acknowledged.}



\author{Zhiyu Liu\corref{cor1}\fnref{label1}}
\ead{zliu9@nd.edu}
\author{Meng Jiang\fnref{label2}}
\author{Hai Lin\fnref{label1}}

\cortext[cor1]{Corresponding author}

\fntext[label1]{Department of Electrical Engineering, University of Notre Dame}
\fntext[label2]{Department of Computer Science and Engineering, University of Notre Dame}

\begin{abstract}
We propose a new graph-based spatial temporal logic for knowledge representation and automated reasoning in this paper. The proposed logic achieves a balance between expressiveness and tractability in applications such as cognitive robots. The satisfiability of the proposed logic is decidable. We apply a Hilbert style axiomatization for the proposed graph-based spatial temporal logic, in which Modus ponens and IRR are the inference rules. We show that the corresponding deduction system is sound and complete and can be implemented through SAT.
\end{abstract}

\begin{keyword}
Spatial temporal logic \sep Knowledge representation \sep Automated reasoning 
\end{keyword}

\end{frontmatter}


\section{Introduction}


Knowledge representation and automated reasoning are the foundations for achieving human-like artificial intelligence. On the one hand, knowledge representation builds an abstract model of the real world, which includes a set of multi-typed concepts and relations among them. On the other hand, automated reasoning searches for or even derives new information based on the abstract knowledge representation, which can be used to solve a wide range of real-world applications such as video interpretation \cite{smoliar1994content}, natural language process \cite{hirschberg2015advances}, and robot task planning \cite{fikes1971strips}.
Lots of work have been done on knowledge representation such as expert systems for solving specific tasks in the 1970s \cite{hayes1983building} and frame languages for rule-based reasoning in the 1980s \cite{kifer1995logical}. Researcher realized any intelligent process needs to be able to store knowledge in some forms and has the ability to reason on them with rules or logic. Currently, one of the most active models in knowledge representation is knowledge graph (or semantic web) such as Google knowledge graph  \cite{googleknowledgegraph} and ConceptNet \cite{liu2004conceptnet}. In the knowledge graph, concepts are modeled as nodes, and their relations are modeled as labeled edges. They make successful applications in areas such as recommendation systems and searching engine. However, the information in the knowledge graph could be inaccurate since contributors could be unreliable. They also face difficulties when describing the time and space sensitive information, which is particularly crucial to robots.

To avoid such difficulties, we, therefore, follow formal methods and logic-based approaches in our work.
In the formal method and logic-based approach, symbolic knowledge representation and reasoning are performed through primitive operations manipulating predefined elementary symbols \cite{hertzberg2008ai}. 
As one of the most investigated symbolic logics, first-order logics \cite{mccarthy1960programs} is a powerful representation and reasoning tool with a well-understood theory. It can be used to model various range of applications. However, first-order logic is, in general, undecidable. A sound and complete deduction algorithm cannot even guarantee termination, let alone real-time automatic reasoning \cite{hertzberg2008ai}. By limiting the expressiveness of first-order logic, some language subsets of first-order logic are decidable and have been used in applications, including software development and verification of software and hardware. 
One of the language subsets of first-order logic is propositional logics \cite{post1921introduction}. For many practical cases, the instances or the propositional variables are finite, which results in a decidable inference process. However, as one needs to consider all combinations of propositional variables, the growth is multiexponential in terms of the domain sizes of the propositional variables. Another language subset of first-order logic is description logic \cite{baader2003description}. In description logic, the domain representation is given by two parts, namely terminological knowledge and assertional knowledge. Terminological knowledge states properties of concepts and roles along with their relations. Assertional knowledge states concrete example of individuals. Description logic inferences are running based on the given terminological and assertional knowledge. The inference algorithms can run efficiently in most practical cases even though they are theoretically intractable. In general, classic logic fails to capture the temporal and spatial characteristics of the knowledge, and the inference algorithms are often undecidable. For example, it is difficult to capture information such as a robot hand is required to hold a cup for at least five minutes. 

The classic logics have sufficient expressive power on sequential planning. However, it may not be sufficient in terms of modeling temporal and spatial relations, such as the effects of action duration and space information from sensors. As spatial and temporal information are often particularly important for robots, spatial logic and temporal logic are studied both separately \cite{baier2008principles,cohn2001qualitative,raman2015reactive} and combined \cite{kontchakov2007spatial,haghighi2016robotic,bartocci2017monitoring}. Examples of temporal logics include Linear temporal logic \cite{baier2008principles} and Signal temporal logic \cite{raman2015reactive}. Examples of spatial logics include Region connection calculus (RCC) \cite{cohn1997qualitative} and $S4_u$ \cite{aiello2007handbook}. One can extend both temporal logic and spatial logic with metric extension such as interval algebra and rectangle algebra at the expense of computational complexity. Furthermore, there has been extensive work on combining temporal logic and spatial logic. Without any restriction on combining temporal predicates/operators and spatial predicates/operators (such as LTL and RCC), the obtained spatial temporal logic has the maximal expressiveness with an undecidable inference \cite{kontchakov2007spatial}. 
By integrating spatial and temporal operators with classic logic operators, spatial temporal logic shows excellent potential in specifying a wide range of task assignments with automated reasoning ability. Thus, in this paper, we are interested in employing spatial temporal logics for knowledge representation and developing automated reasoning based on it. One of the significant challenges comes from the complexity of the inference algorithm. Most spatial temporal logics enjoy immense expressiveness due to their semantic definition and the method of combining spatial and temporal operators \cite{kontchakov2007spatial}. However, their inference algorithms are often undecidable, and human inputs are often needed to facilitate the deduction process.  
The lack of tractability limits the application of spatial temporal logic. A balance between expressiveness and tractability is needed for spatial temporal logic.

Motivated by the challenges faced in existing work, we propose a new graph-based spatial temporal logic (GSTL) with a sound and complete inference system. GSTL is defined based on a hierarchical graph where both connectivity and parthood among spatial elements are considered. By introducing interval algebra in GSTL, it has abilities to specify directional relations between spatial elements and temporal relations between time intervals. The proposed GSTL can specify a wide range of spatial and temporal specifications while maintains a tractable inference algorithm because the interaction between temporal operators and spatial operators in the proposed GSTL is somewhat limited. 
The contributions of this paper are mainly focusing on proposing a new spatial temporal logic with a better balance between expressiveness and tractability. The satisfiability of the proposed GSTL is decidable. The inference systems are sound and complete and are implemented through SAT.
 


The rest of the paper is organized as follows. In Section \ref{Section: GSTL definitio}, we propose the graph-based spatial temporal logic, GSTL. 
We introduce the deduction system in Section \ref{Section: deduction system}, where the soundness and completeness are discussed. The implementation of automated reasoning is introduced in Section \ref{Section: implementation} with an example. 
Section \ref{Section: conclusion} concludes the paper.

\section{Graph-based spatial temporal logic}\label{Section: GSTL definitio}

This section aims to introduce the novel graph-based spatial temporal logic. We begin with a brief review of temporal representation and temporal logic.

\subsection{Temporal representation and logics}
\subsubsection{Temporal logics}
As the proposed spatial temporal logic is built by extending signal temporal logic with extra spatial operators, we first introduce the definition of signal temporal logic (STL) \cite{raman2015reactive}. 
\begin{definition}[STL Syntax]\rm
STL formulas are defined recursively as:
$$
\varphi::={\rm True}|\pi^\mu|\neg\pi^{\mu}|\varphi\land\psi|\varphi\lor\psi|\Box_{[a,b]} \psi | \varphi\sqcup_{[a,b]} \psi,
$$
where $\pi^\mu$ is an atomic predicate $\mathbb{R}^n\to\{0,1\}$ whose truth value is determined by the sign of a function $\mu:\mathbb{R}^n\to\mathbb{R}$, i.e., $\pi^\mu$ is true if and only if $\mu({\bf x})>0$; and $\psi$ is an STL formula. The ``eventually" operator $\Diamond$ can also be defined here by setting $\Diamond_{[a,b]} \varphi={\rm True}\sqcup_{[a,b]} \varphi$.
\end{definition}

The semantics of STL with respect to a discrete-time signal $\bf x$ are introduced as follows, where $({\bf x},t_k)\models \varphi$ denotes for which signal values and at what time index the formula $\varphi$ holds true.
\begin{definition}[STL Semantics]\rm
The validity of an STL formula $\varphi$ with respect to an infinite run ${\bf x}= x_0x_1x_2\ldots$ at time $t_k$ is defined inductively as follows.
\begin{enumerate}
\item $({\bf x},t_k)\models \mu$, if and only if $\mu(x_k)>0$;
\item $({\bf x},t_k)\models \neg\mu$, if and only if $\neg(({\bf x},t_k)\models \mu)$;
\item $({\bf x},t_k)\models \varphi\land\psi$, if and only if $({\bf x},t_k)\models \varphi$ and $({\bf x},t_k)\models \psi$;
\item $({\bf x},t_k)\models \varphi\lor\psi$, if and only if $({\bf x},t_k)\models \varphi$ or $({\bf x},t_k)\models \psi$;
\item $({\bf x},t_k)\models \Box_{[a,b]}\varphi$, if and only if $\forall t_{k'}\in[t_k+a,t_k+b]$, $({\bf x},t_{k'})\models \varphi$;
\item $({\bf x},t_k)\models \varphi\sqcup_{[a,b]}\psi$, if and only if $\exists t_{k'}\in[t_k+a,t_k+b]$ such that $({\bf x},t_{k'})\models \psi$ and $\forall t_{k''}\in[t_k,t_{k'}]$, $({\bf x},t_{k''})\models \varphi$;

\item $({\bf x},t_k)\models \Diamond_{[a,b]}\varphi$, if and only if $\exists t_{k'}\in[t_k+a,t_k+b]$, $({\bf x},t_{k'})\models \varphi$.
\end{enumerate}
\end{definition}

A run ${\bf x}$ satisfies $\varphi$, denoted by $\bf x\models\varphi$, if $({\bf x},t_0)\models\varphi$. Intuitively, $\bf x\models \Box_{[a,b]}\varphi$ if $\varphi$ holds at every time step between $a$ and $b$, ${\bf x}\models \varphi\sqcup_{[a,b]}\psi$ if $\varphi$ holds at every time step before $\psi$ holds and $\psi$ holds at some time step between $a$ and $b$, and ${\bf x}\models \Diamond_{[a,b]}\varphi$ if $\varphi$ holds at some time step between $a$ and $b$. An STL formula $\varphi$ is {\it bounded-time} if it contains no unbounded operators. The bound of $\varphi$ can be interpreted as the horizon of future predicted signals $\bf x$ that is needed to calculate the satisfaction of $\varphi$.

\subsubsection{Temporal representation}
The flow of time we are interested in is a non-empty set of time points with a partial irreflexive ordering. We formally define a flow of time with a set of time points and a partial ordering as follows.
\begin{definition}[Flow of time]
We define the flow of time as a non-empty set for time points with a connected partial ordering $(T,<)$ where $T$ is time and $<$ is the irreflexive partial order. $(T,<)$ is said to be connected if and only if for all $x,y\in T$ with $x<y$ there is a finite sequence such that $x=x_0<x_1<...<x_n=y, ~\forall i\in[1,...,n],~ x_i\in T$.
\end{definition}

There are multiple ways to represent time, e.g., continuous time, discrete time, and interval. In this paper, we use a discrete time interval. In order to increase the expressiveness of the proposed spatial temporal logic, we also consider Allen interval algebra to extend the until operator in STL.

\begin{definition}[Allen interval algebra\cite{allen1984towards}]
Allen interval algebra defines the following 13 temporal relationships between two intervals, namely before ($b$), meet ($m$), overlap ($o$), start ($s$), finish ($f$), during ($d$), equal ($e$), and their inverse ($^{-1}$) except equal. The 13 temporal relationships are illustrated in Fig. \ref{Allen interval algebra}. 
\begin{figure}
    \centering
    \includegraphics[scale=0.28]{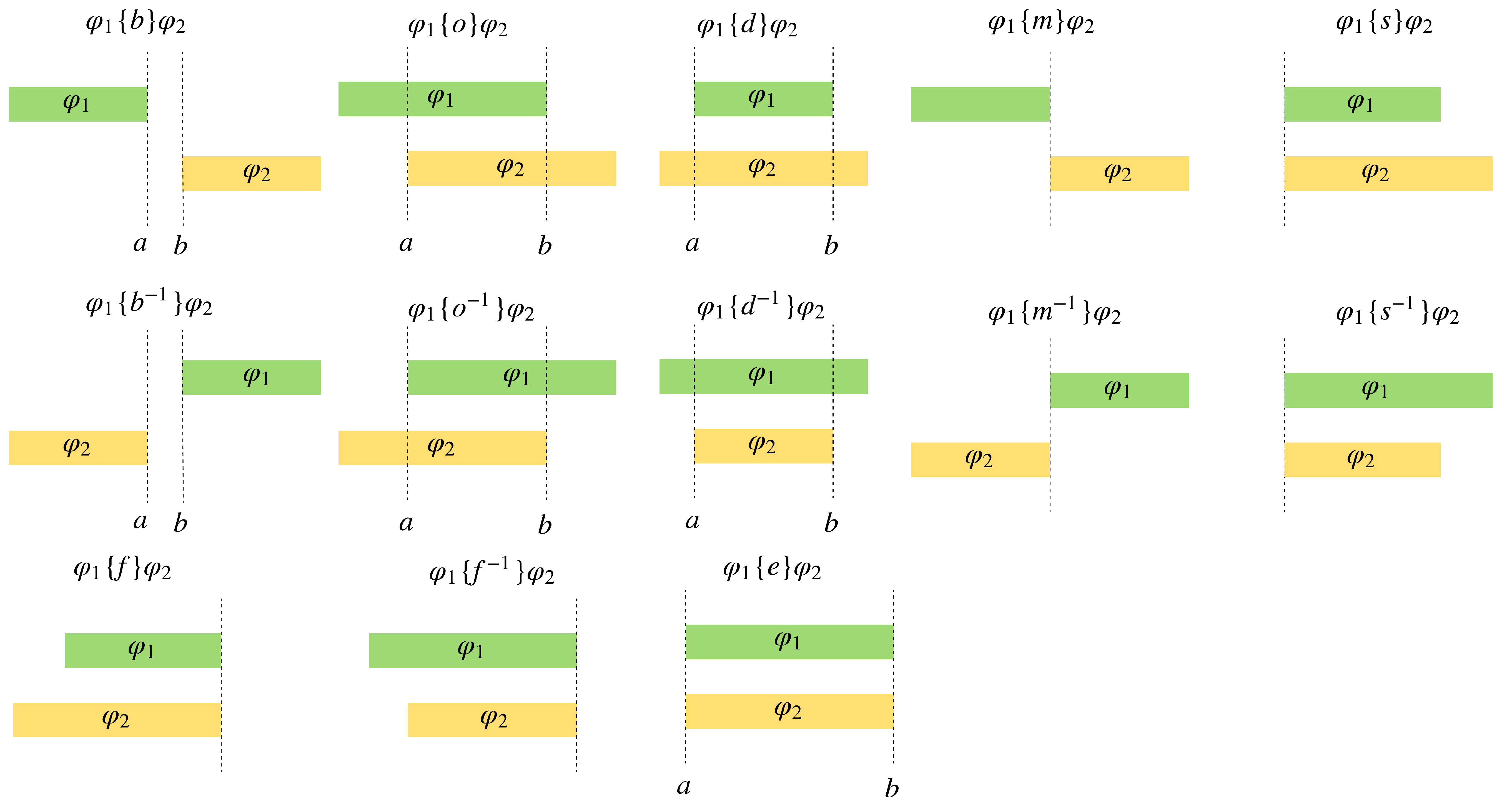}
    \caption{The 13 basic qualitative relations in interval algebra}
    \label{Allen interval algebra}
\end{figure}
\end{definition}

\subsection{Spatial model}

After reviewing temporal representation and logic, we next take a look at the spatial model we are interested in. The basic spatial characteristics we considered in our spatial model are, namely, ontology, mereotopology, and metric spatial representation. 
\subsubsection{Spatial ontology}
We use regions as the basic spatial elements instead of points. Within the qualitative spatial representation community, there is a strong tendency to take regions of space as the primitive spatial entity \cite{cohn2001qualitative}. In practice, a reasonable constraint to impose would be that regions are all rational polygons.

\subsubsection{Mereotopology}
As for the relations between regions, we consider mereotopology, meaning we consider both mereology (parthood) and topology (connectivity) in our spatial model. Parthood describes the relational quality of being a part. For example, \emph{wheel is a part of car} and \emph{cup is one of the objects in a cabinet}. Connectivity describes if two spatial objects are connected. For example, \emph{hand grabs a cup}. By considering mereotopology, the proposed GSTL will have more expressive power than existing spatial temporal logic STREL \cite{bartocci2017monitoring} and SpaTeL \cite{haghighi2015spatel}.

We apply a graph with a hierarchy structure to represent the spatial model.
Denote $\Omega=\cup_{i=1}^{n}\Omega_i$ as the union of the sets of all possible spatial objects where $\Omega_i$ represents a certain set of spatial objects or concepts.
\begin{definition}[Graph-based Spatial Model]
The graph-based spatial model with a hierarchy structure $\mathcal{G}=(\mathcal{V},\mathcal{E})$  is constructed by the following rules.
\begin{itemize}
    \item The node set $\mathcal{V}=\{V_1,...,V_n\}$ is consisted of a group of node set where each node set $V_k$ represents a finite subset spatial objects from $\Omega_i$. At each layer, $V_k=\{v_{k,1},...,v_{k,n_k}\}$ contains nodes which represent $n_k$ spatial objects in $\Omega_i$.
    
    \item The edge set $\mathcal{E}$ is used to model the relationship between nodes such as whether two nodes are adjacent or if one node is included within another node. $e_{i,j}\in \mathcal{E}$ if and only if $v_i$ and $v_j$ are connected via two relations defined below.
    
    \item $v_{k,i}$ is a \emph{parent} of $v_{k+1,j}$, $\forall k\in[1,...,n-1]$, if and only if $v_{k,i} \wedge v_{k+1,j}= v_{k+1,j}$. $v_{k+1,j}$ is called a \emph{child} of $v_{k,i}$ if $v_{k,i}$ is its parent. $v_0$ is the only node that does not have a parent. All nodes in $V_n$ do not have child. Furthermore, if $v_i$ and $v_j$ are a pair of parent-child, then $e_{i,j}\in\mathcal{E}$. $v_i$ is a \emph{neighbor} of $v_j$ and $e_{i,j}\in\mathcal{E}$ if and only if there exist $k$ such that $v_i\in V_k$, $v_j\in V_k$, and the minimal distance between $v_i$ and $v_j$ is less than a given threshold $\epsilon$.  
\end{itemize}
\end{definition}

\begin{example}\rm 
An example is given in Fig. \ref{exp:Graph with a hierarchy structure} to illustrate the proposed spatial model. In Fig. \ref{exp:Graph with a hierarchy structure}, $V_1=\{kitchen\}$, $V_2=\{body~part,~ tool,~ material\}$ and $V_3=\{head,~ hand,~ cup,~ bowl,~ table,~ milk,~ butter\}$. The parent-child relationships are drawn in solid lines, and the neighbor relationships are drawn in dashed lines. Each layer represents the space with different spatial concepts or objects by taking categorical values from $\Omega_i$, and connections are built between layers. The hierarchical graph is able to express facts such as ``head is part of a body part," and ``cup holds milk."

\end{example}

\begin{figure}
 \centering
 \includegraphics[scale=0.7]{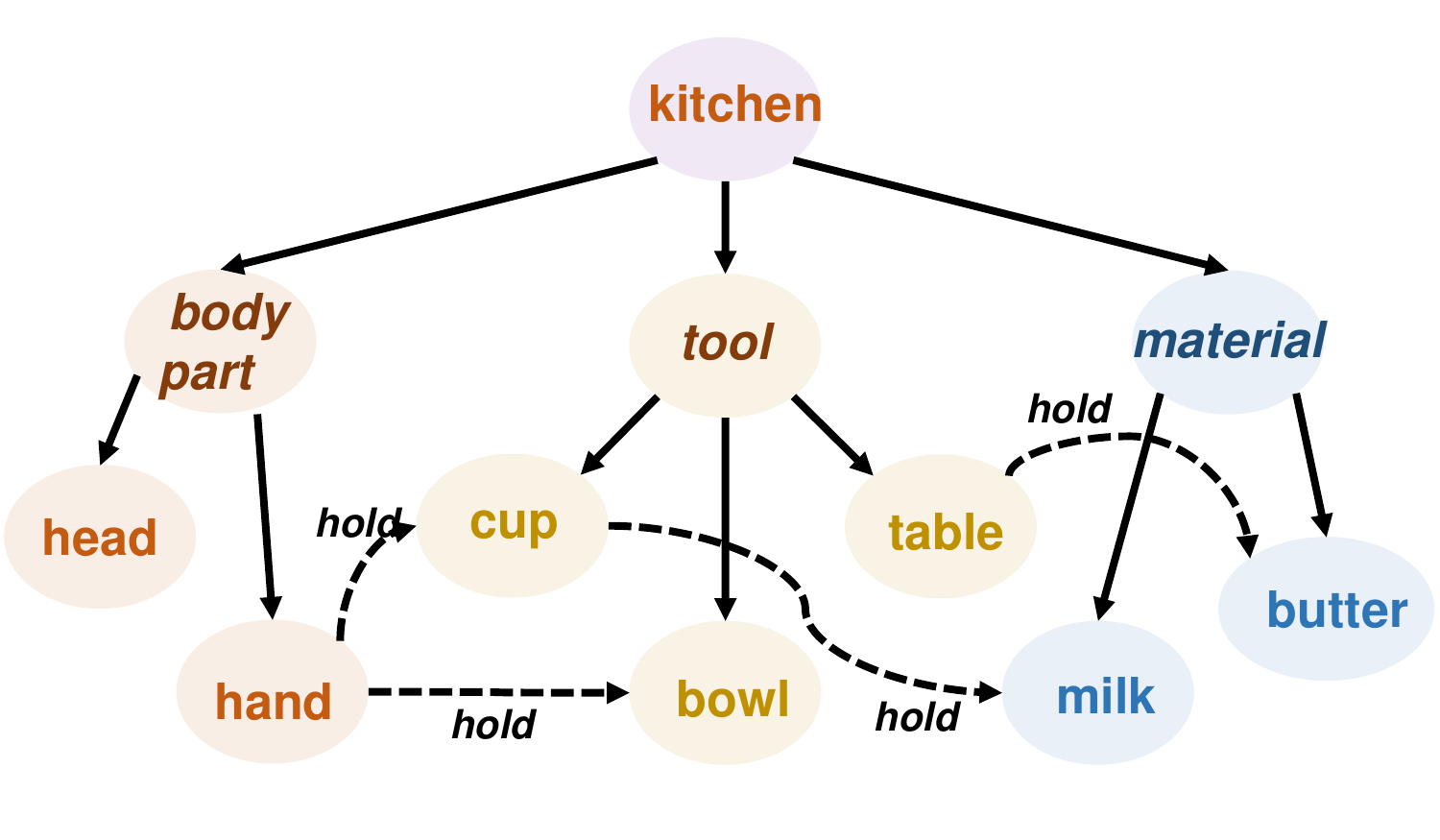}
 \caption{The hierarchical graph with three basic spatial operators: parent, child, and neighbor where the parent-child relations are drawn in solid line and the neighbor relations are drawn in dash line.}
 \label{exp:Graph with a hierarchy structure}
\end{figure}

\subsubsection{Metric Spatial representation}
To further increase the expressiveness of the proposed GSTL for cognitive robots, we include directional information in our spatial model. It is done by extending rectangle algebra into 3D, which is more suitable for cognitive robots. We first briefly introduce rectangle algebra below. 

\begin{definition}[Rectangle algebra \cite{smith1992algebraic}]
In rectangle algebra, spatial objects are considered as rectangles whose sides are parallel to the axes of some orthogonal basis in a 2D Euclidean space. The $13\times 13$ basic relations between two spatial objects are defined by extending interval algebra in 2D.
\begin{align*}
    \mathcal{R}_{RA}=\{(A,B):A,B\in \mathcal{R}_{IA}\},
\end{align*}
where $\mathcal{R}_{IA}$ is the set containing 13 interval algebra relations. 

\end{definition}
Rectangle algebra extends interval algebra into 2D. It can be used to expressive directional information such as left, right, up, down, and their combination. However, rectangle algebra is defined in 2D only while cognitive robots are often deployed in a 3D environment. Thus, in this paper, we extend it to 3D.
\begin{align*}
    \mathcal{R}_{CA}=\{(A,B,C):A,B,C\in \mathcal{R}_{IA}\},
\end{align*}
where $13\times 13\times 13$ basic relations are defined for cubic algebra (CA). An example is given in Fig. \ref{Relations in CA} to illustrate the cubic algebra. For spatial objects $X$ and $Y$ in the left where $X$ is at the front, left, and below of $Y$, we have $X\{(b,b,o)\}Y$. For spatial objects $X$ and $Y$ in the right where $Y$ is completely on top of $X$, we have $X\{(e,e,m)\}Y$.
\begin{figure}
    \centering
    \includegraphics[scale=0.3]{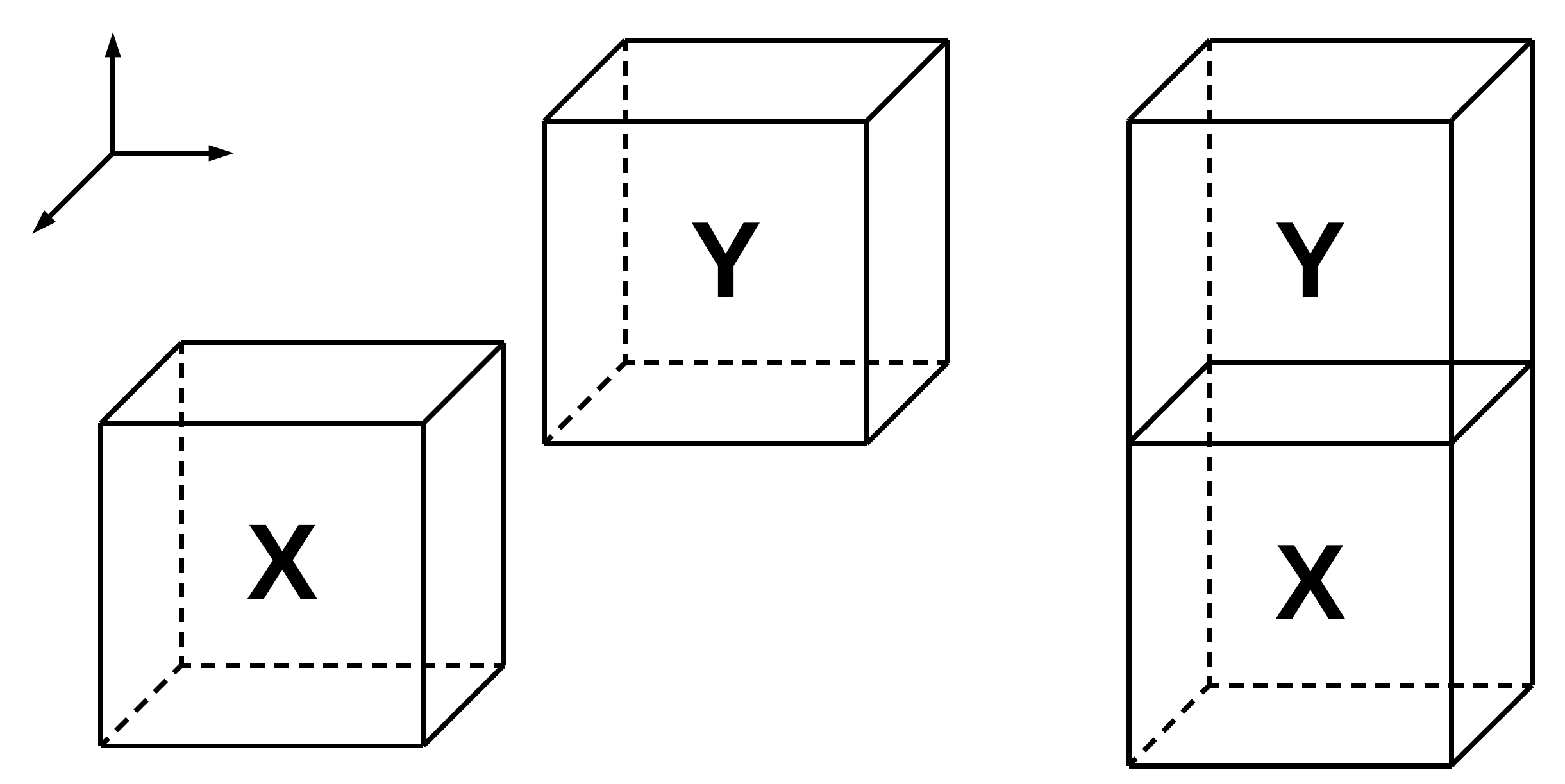}
    \caption{Representing directional relations between objects $X$ and $Y$ in CA}
    \label{Relations in CA}
\end{figure}

\subsubsection{Spatial temporal signals}
The spatial temporal signals we are interested in are defined as follow.
\begin{definition}[Spatial Temporal Signal]
A spatial temporal signal $x(v,t)$ is defined as a scalar function for node $v$ at time $t$
\begin{equation}
    x(v,t): V\times T \rightarrow D,
\end{equation}
where $D$ is the signal domain. Depends on different applications, $D$ can be a Boolean domain, real-value domain, and categorical domain and etc. 
\end{definition}

\subsection{Graph-based spatial temporal logic}
With the temporal model and spatial model in mind, we now give the formal syntax and semantics definition of GSTL in this section. GSTL is defined based on a hierarchy graph by combining STL and three spatial operators where the until operator and the neighbor operator are enriched by interval algebra (IA) and cubic algebra (CA), respectively. 

\begin{definition}[GSTL Syntax]\label{definition: GSTL syntax}
The syntax of a GSTL formula is defined recursively as 
\begin{equation}
    \begin{aligned}
    &\tau := \mu ~|~ \neg\tau ~|~ \tau_1\wedge\tau_2 ~|~ \tau_1\vee\tau_2 ~|~ \mathbf{P}_A\tau ~|~ \mathbf{C}_A \tau ~|~ \mathbf{N}_A^{\left \langle *, *, * \right \rangle} \tau,\\
    &\varphi := \tau ~|~ \neg \varphi ~|~ \varphi_1\wedge\varphi_2 ~|~ \varphi_1\vee\varphi_2 ~|~ \Box_{[a,b]} \varphi~| ~ \varphi_1\sqcup_{[a,b]}^{*}\varphi_2,
    \end{aligned}
    \label{complexity proof 1}
\end{equation}
where $\tau$ is a spatial term and $\varphi$ is the GSTL formula. $\mu$ is an atomic predicate (AP), negation $\neg$, conjunction $\wedge$ and disjunction $\vee$ are the standard Boolean operators. Spatial operators are ``parent" $\mathbf{P}_A$, ``child" $\mathbf{C}_A$, and ``neighbor" $\mathbf{N}_A^{\left \langle *, *, * \right \rangle}$, where $A$ denotes the set of nodes which they operate on and $*\in\{b, o, d, \equiv, m, s, f\}$ is one of the seven temporal relationships defined in the Allen interval algebra. $\Box_{[a,b]}$ is the ``always" operator and $\sqcup_{[a,b]}^{*}$ is the ``until" temporal operators with an Allen interval algebra extension, where $[a,b]$ being a real positive closed interval and $*\in\{b, o, d, \equiv, m, s, f\}$.
\end{definition}

\begin{remark}
We only consider a subclass of CA and IA relations, namely convex IA relations. Convex CA relations are composed exclusively of convex IA relations, which is defined in \cite{ligozat1996new}. For example, $\{b, m, o\}$ is a convex IA relation while $\{b, o\}$ is not. It has been shown that the spatial reasoning on convex relations can be solved in polynomial time using constraint programming while the reasoning with the full CA/IA expressiveness is NP-complete.
\end{remark}

Intuitively, the parent operator $\mathbf{P}_A$ describes the behavior of the parent of the current node. The child operator $\mathbf{C}_A$ describes the behavior of children of the current node in the set $A$. The neighbor operator $\mathbf{N}_A^{\left \langle *, *, * \right \rangle}$ describes the behavior of neighbors of the current node in the set $A$. 

Before we give the semantics definition of GSTL, we first define an interpretation function. The interpretation function $\iota(\mu,x(v,t)): AP\times D \rightarrow R$ interprets the spatial temporal signal as a number based on the given atomic proposition $\mu$. The qualitative semantics of the GSTL formula is given as follows.

\begin{definition}[GSTL Qualitative Semantics]
The satisfiability of a GSTL formula $\varphi$ with respect to a spatial temporal signal $x(v,t,\iota)$ at time $t$ and node $v$ is defined inductively as follows.
\begin{enumerate}
\item $x(v,t,\iota)\models \mu$, if and only if $\iota(\mu,x(v,t))>0$;
\item $x(v,t,\iota)\models \neg\varphi$, if and only if $\neg(x(v,t,\iota))\models \varphi)$;
\item $x(v,t,\iota)\models \varphi\land\psi$, if and only if $x(v,t,\iota)\models \varphi$ and $x(v,t,\iota)\models \psi$;
\item $x(v,t,\iota)\models \varphi\lor\psi$, if and only if $x(v,t,\iota)\models \varphi$ or $x(v,t,\iota)\models \psi$;
\item $x(v,t,\iota)\models \Box_{[a,b]}\varphi$, if and only if $\forall t'\in[t+a,t+b]$, $x(v,t',\iota)\models \varphi$;
\item $x(v,t,\iota)\models \Diamond_{[a,b]}\varphi$, if and only if $\exists t'\in[t+a,t+b]$, $x(v,t',\iota)\models \varphi$;
\end{enumerate}
The until operator with interval algebra extension is defined as follow.
\begin{enumerate}
\item $x(v,t,\iota)\models \varphi\sqcup_{[a,b]}^b\psi$, if and only if $x(v,t,\iota)\models\Box_{[a,b]}\neg(\varphi\vee\psi)$ and $\exists t_1<a,~\exists t_2>b$ such that $x(v,t,\iota)\models\Box_{[t_1,a]}(\varphi\wedge\neg\psi)\wedge\Box_{[b,t_2]}(\neg\varphi\wedge\psi)$;
\item $x(v,t,\iota)\models \varphi\sqcup_{[a,b]}^o\psi$, if and only if $x(v,t,\iota)\models\Box_{[a,b]}(\varphi\wedge\psi)$ and $\exists t_1<a,~\exists t_2>b$ such that $x(v,t,\iota)\models\Box_{[t_1,a]}(\varphi\wedge\neg\psi)\wedge\Box_{[b,t_2]}(\neg\varphi\wedge\psi)$;
\item $x(v,t,\iota)\models \varphi\sqcup_{[a,b]}^d\psi$, if and only if $x(v,t,\iota)\models\Box_{[a,b]}(\varphi\wedge\psi)$ and $\exists t_1<a,~\exists t_2>b$ such that $x(v,t,\iota)\models\Box_{[t_1,a]}(\neg\varphi\wedge\psi)\wedge\Box_{[b,t_2]}(\neg\varphi\wedge\psi)$;
\item $x(v,t,\iota)\models \varphi\sqcup_{[a,b]}^\equiv\psi$, if and only if $x(v,t,\iota)\models\Box_{[a,b]}(\varphi\wedge\psi)$ and $\exists t_1<a,~\exists t_2>b$ such that $x(v,t,\iota)\models\Box_{[t_1,a]}(\neg\varphi\wedge\neg\psi)\wedge\Box_{[b,t_2]}(\neg\varphi\wedge\neg\psi)$;
\item $x(v,t,\iota)\models \varphi\sqcup^m\psi$, if and only if $\exists t_1<t<t_2$ such that $x(v,t,\iota)\models\Box_{[t_1,t]}(\varphi\wedge\neg\psi)\wedge\Box_{[t,t_2]}(\neg\varphi\wedge\psi)$;
\item $x(v,t,\iota)\models \varphi\sqcup^s\psi$, if and only if $\exists t_1<t<t_2$ such that $x(v,t,\iota)\models\Box_{[t_1,t]}(\neg\varphi\wedge\neg\psi)\wedge\Box_{[t,t_2]}(\varphi\wedge\psi)$;
\item $x(v,t,\iota)\models \varphi\sqcup^f\psi$, if and only if $\exists t_1<t<t_2$ such that $x(v,t,\iota)\models\Box_{[t_1,t]}(\varphi\wedge\psi)\wedge\Box_{[t,t_2]}(\neg\varphi\wedge\neg\psi)$;
\end{enumerate}
The spatial operators are defined as follows.
\begin{enumerate}
\item $x(v,t,\iota)\models \mathbf{P}_A\tau$, if and only if $\forall v_p\in A,~x(v_p,t,\iota)\models \tau$ where $v_p$ is the parent of $v$;
\item $x(v,t,\iota)\models \mathbf{C}_{A}\tau$, if and only if $\forall v_c\in A,~x(v_c,t,\iota)\models \tau$ where $v_c$ is a child of $v$;
\item $x(v,t,\iota)\models \mathbf{N}_{A}^{\left \langle b, *, * \right \rangle}\tau$, if and only if $\forall v_n \in A,~x(v_n,t,\iota)\models \tau$ where $v_n$ is a neighbor of $v$ and $v_n[x^+]<v[x^-]$;
\item $x(v,t,\iota)\models \mathbf{N}_{A}^{\left \langle o, *, * \right \rangle}\tau$, if and only if $\forall v_n \in A,~x(v_n,t,\iota)\models \tau$ where $v_n$ is a neighbor of $v$ and $v_n[x^-]<v[x^-]<v_n[x^+]<v[x^+]$;
\item $x(v,t,\iota)\models \mathbf{N}_{A}^{\left \langle d, *, * \right \rangle}\tau$, if and only if $\forall v_n \in A,~x(v_n,t,\iota)\models \tau$ where $v_n$ is a neighbor of $v$ and $v_n[x^-]<v[x^-]<v[x^+]<v_n[x^+]$;
\item $x(v,t,\iota)\models \mathbf{N}_{A}^{\left \langle \equiv, *, * \right \rangle}\tau$, if and only if $\forall v_n \in A,~x(v_n,t,\iota)\models \tau$ where $v_n$ is a neighbor of $v$ and $v_n[x^-]=v[x^-],~v[x^+]=v_n[x^+]$;
\item $x(v,t,\iota)\models \mathbf{N}_{A}^{\left \langle m, *, * \right \rangle}\tau$, if and only if $\forall v_n \in A,~x(v_n,t,\iota)\models \tau$ where $v_n$ is a neighbor of $v$ and $v_n[x^+]=v[x^-]$;
\item $x(v,t,\iota)\models \mathbf{N}_{A}^{\left \langle s, *, * \right \rangle}\tau$, if and only if $\forall v_n \in A,~x(v_n,t,\iota)\models \tau$ where $v_n$ is a neighbor of $v$ and $v_n[x^-]=v[x^-]$;
\item $x(v,t,\iota)\models \mathbf{N}_{A}^{\left \langle f, *, * \right \rangle}\tau$, if and only if $\forall v_n \in A,~x(v_n,t,\iota)\models \tau$ where $v_n$ is a neighbor of $v$ and $v_n[x^+]=v[x^+]$,
\end{enumerate}
where $v[x^-]$ and $v[x^+]$ denote the lower and upper limit of node $v$ in x-direction. 
\end{definition}
Definition for the neighbor operator in y-direction and z-direction is omitted for simplicity. Notice that the reverse relations in IA and CA can be easily defined by changing the order of the two GSTL formulas involved, e.g., $\varphi\sqcup_{[a,b]}^{o^{-1}}\psi\Leftrightarrow\psi\sqcup_{[a,b]}^o\varphi$. As usual, $\varphi_1\rightarrow\varphi_2$, $\varphi_1\leftrightarrow\varphi_2$ abbreviate $\neg\varphi_1\vee\varphi_2$, $(\varphi_1\rightarrow\varphi_2)\wedge(\varphi_2\rightarrow\varphi_1)$, respectively.
We further define another six spatial operators $\mathbf{P}_{\exists}\tau$, $\mathbf{P}_{\forall}\tau$, $\mathbf{C}_{\exists}\tau$, $\mathbf{C}_{\forall}\tau$, $\mathbf{N}_{\exists}^{\left \langle *, *, * \right \rangle}\tau$ and $\mathbf{N}_{\forall}^{\left \langle *, *, * \right \rangle}\tau$ based on the definition above. 
\begin{align*}
    &\mathbf{P}_{\exists}\tau=\wedge_{i=1}^{n_p}\mathbf{P}_{A_i}\tau,
    ~\mathbf{P}_{\forall}\tau=\vee_{i=1}^{n_p}\mathbf{P}_{A_i}\tau,~A_i=\{v_{p,i}\},\\
    &\mathbf{C}_{\exists}\tau=\wedge_{i=1}^{n_c}\mathbf{C}_{A_i}\tau,
    ~\mathbf{C}_{\forall}\tau=\vee_{i=1}^{n_c}\mathbf{C}_{A_i}\tau,~A_i=\{v_{c,i}\},\\
    &\mathbf{N}_{\exists}^{\left \langle *, *, * \right \rangle}\tau=\wedge_{i=1}^{n_n}\mathbf{N}_{A_i}^{\left \langle *, *, * \right \rangle}\tau,
    ~\mathbf{N}_{\forall}^{\left \langle *, *, * \right \rangle}\tau=\vee_{i=1}^{n_n}\mathbf{N}_{A_i}^{\left \langle *, *, * \right \rangle}\tau,~A_i=\{v_{n,i}\},
\end{align*}
where $v_{p,i}$, $v_{c,i}$, $v_{n,i}$ are the parent, child, and neighbor of $v$ respectively and $n_p$, $n_c$, and $n_n$ are the number of parents, children and neighbors of $v$ respectively.

The proposed GSTL is able to express various temporal and spatial properties. For example, the until operator with Allen interval algebra extension can express temporal relations between two intervals, as shown in Fig. \ref{Enrich until operator with IA}. In Fig. \ref{Relations in CA}, we can use the defined neighbor operator to represent the spatial relations for $X$ and $Y$ as $X\models\mathbf{N}_\exists^{\left \langle b, b, o \right \rangle}Y$ and $X\models\mathbf{N}_\exists^{\left \langle e, e, m \right \rangle}Y$.
\begin{figure}
    \centering
    \includegraphics[scale=0.3]{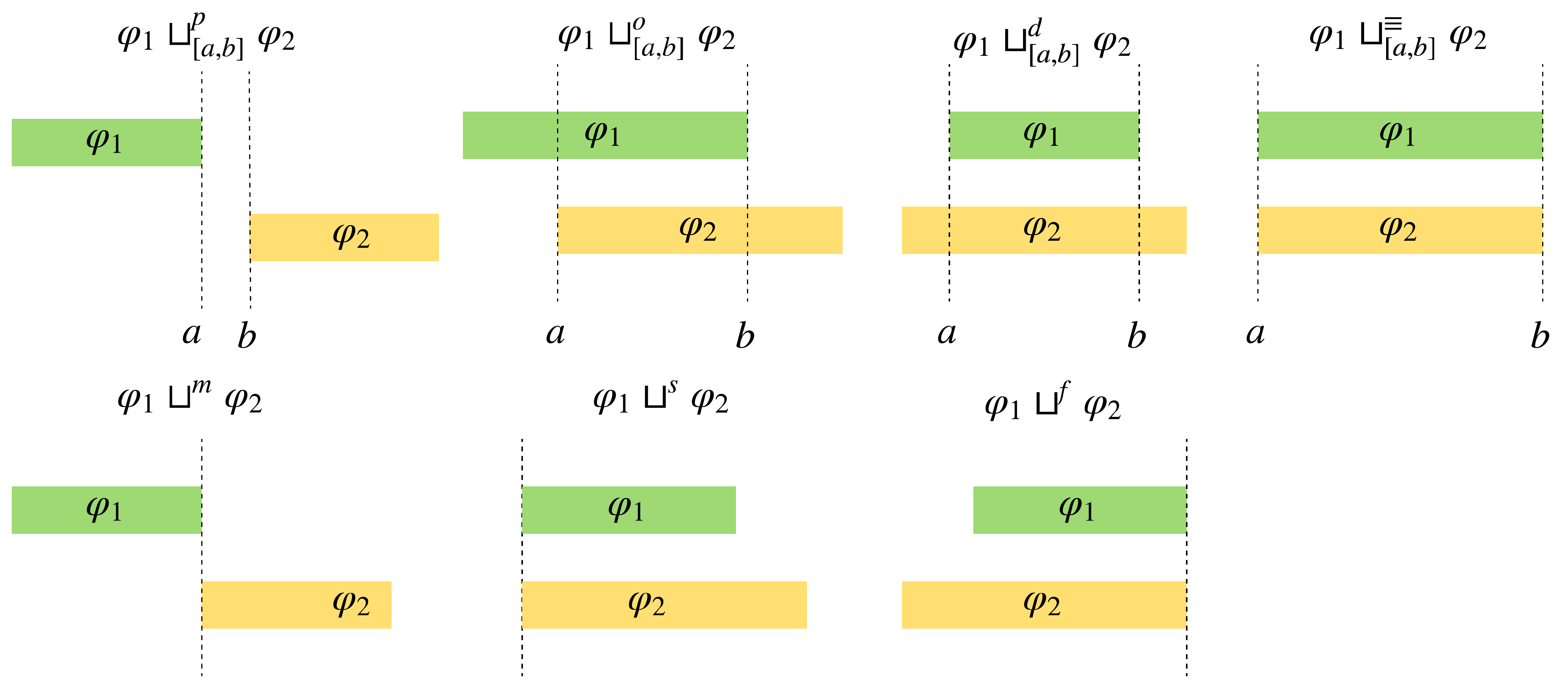}
    \caption{Enrich until operator with IA}
    \label{Enrich until operator with IA}
\end{figure}

\subsection{Expressiveness and tractability of GSTL}

We investigate the expressiveness and tractability of the proposed spatial temporal logic and compare it with respect to existing several classic temporal and spatial logic, namely  $S4_u$, $RCC-8$, STL, and $S4_u\times LTL$.
First, we compare the expressiveness with existing spatial logics $S4_u$ and $RCC-8$. $RCC-8$ \cite{smith1992algebraic} was introduced in geographical information systems as a decidable subset of Region Connection Calculus ($RCC$). $RCC-8$ studies region variables and defines eight binary relations among the variables. If we assume regions are rectangles in 2D or cubic in 3D, all eight relations can be expressed by the neighbor operator in GSTL.  For example, the equal relation $EQ(a,b)$ in $RCC-8$ can be represented as $a\models\mathbf{N}_\exists^{\left \langle e, e, e \right \rangle}b$. The parent/child relations (e.g. $fork\models\mathbf{P}_\exists tools$) and directional information (e.g. left and right) in GSTL cannot be expressed by $RCC-8$. $S4_u$ is a well known propositional modal logic that has strictly larger expressive power than $RCC-8$ \cite{kontchakov2007spatial}. All four atoms (subset, negation, conjunction, and disjunction) in $S4_u$ formulas can be expressed by GSTL. However, the interior and closure operators in spatial terms of $S4_u$ cannot be expressed in GSTL. Similar to $RCC-8$, directional information in GSTL cannot be expressed by $S4_u$. 

Then, we compare GSTL with a popular temporal logic STL \cite{raman2015reactive}. As we mentioned before, GSTL is defined by extending STL through enriching the until operator with interval algebra. Compared to STL, GSTL is able to express more information between two temporal intervals, such as overlap and during which cannot be expressed by STL. 

In the end, we compare GSTL with existing spatial temporal logics, including $S4_u\times LTL$, SpaTeL, and STREL. $S4_u\times LTL$ are defined by combing two modal logics where no restrictions are added on their spatial and temporal predicates. Due to the freedom of combining spatial and temporal operators, $S4_u\times LTL$ enjoys a powerful expressiveness where the spatial terms can change over time. It can express statement such as ``an egg will eventually turn into a chicken". However, even though $S4_u$ and LTL are decidable, the satisfiability problem for $S4_u\times LTL$ is not decidable. Compared to $S4_u\times LTL$, GSTL has less expressive power in the sense that the spatial term cannot change over time. However, the complexity of the satisfiability problem for GSTL is decidable, which is discussed below. 

One of the fundamental problems for any logic is the satisfiability of a finite set of formulas. Specifically,  the satisfiability problem given a set of GSTL formulas is defined as the problem of deciding if the formulas are satisfiable or consistent by finding a spatial temporal signal which can satisfy all spatial temporal constraints specified by the GSTL formulas. The satisfiability problem is important as the deduction problem discussed in the following section can often be transferred to the satisfiability problem. The most important algorithmic properties of the satisfiability problem is its computational complexity. We show that the complexity of the satisfiability problem for GSTL is decidable.
First, we adopt the following assumption, which is reasonable for applications such as cognitive robots.
\begin{assumption}[Domain closure]\label{assumption: domain clousre}
The only objects in the domain are those representable using the existing symbols, which do not change over time.
\end{assumption}
Then we give the computational complexity of the satisfiability problem for the proposed GSTL in the following theorem.
\begin{theorem}
The computational complexity of the satisfiability problem for the proposed GSTL formulas is NP-complete, and the finite sets of satisfiable constraints are recursively enumerable.
\label{theorem:complexity}
\end{theorem}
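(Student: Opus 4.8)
The plan is to establish NP-completeness by the two usual halves — hardness by reduction from a known NP-hard problem, and membership in NP by exhibiting a polynomially checkable certificate — and then to note that recursive enumerability of the finite satisfiable sets follows immediately from NP-membership. For NP-hardness, I would reduce from the satisfiability problem for ordinary Boolean propositional logic (or, if a sharper reduction is desired, from the consistency problem for a network of convex interval-algebra constraints, which is known to be polynomial, so actually Boolean SAT is the right source). The reduction is the obvious one: a propositional formula over variables $p_1,\dots,p_m$ is encoded as a GSTL formula over a single node with $m$ atomic predicates $\mu_1,\dots,\mu_m$ at a single time instant, using only $\neg,\wedge,\vee$; the encoding is linear in size, and the Boolean formula is satisfiable exactly when the GSTL formula admits a satisfying signal. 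Since the GSTL fragment in question contains full propositional logic, this gives NP-hardness immediately.

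For membership in NP, the key is Assumption~\ref{assumption: domain clousre} (domain closure) together with the bounded-time nature of the temporal operators and the fact that we restricted attention to \emph{convex} IA/CA relations. First I would argue that the graph $\mathcal{G}=(\mathcal{V},\mathcal{E})$ is finite and fixed, so there are finitely many nodes; the spatial operators $\mathbf{P}_A,\mathbf{C}_A,\mathbf{N}_A^{\langle *,*,*\rangle}$ only quantify over the finitely many parents/children/neighbours of a node, so each spatial term, once unfolded over the graph, reduces to a finite Boolean combination of atomic predicates evaluated at nodes. Next I would bound the temporal horizon: every GSTL formula is bounded-time (it contains no unbounded $\Box$ or $\sqcup$), so as in the standard STL-to-SAT encoding there is a finite horizon $N$, polynomial in the formula, such that satisfaction depends only on the signal values at time indices $0,\dots,N$. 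The certificate then consists of (i) a truth assignment to each atomic predicate $\mu$ at each of the finitely many node–time pairs, and (ii) for the spatial metric part, a choice of a basic (convex) IA/CA relation for each relevant pair of nodes in each direction, i.e.\ an assignment witnessing the interval/cube endpoint inequalities $v[x^-],v[x^+]$ etc. Checking that (i) makes the (fully unfolded, propositional) formula true is linear; checking that (ii) is globally consistent is exactly a convex IA/CA constraint-network consistency check, which by the cited result (see the Remark after Definition~\ref{definition: GSTL syntax}) is solvable in polynomial time via path-consistency/constraint programming. Hence the whole certificate is polynomially checkable, giving membership in NP.

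Combining the two halves gives NP-completeness. Recursive enumerability of the finite satisfiable sets is then automatic: since satisfiability is decidable (indeed in NP), one can enumerate all finite sets of GSTL formulas and, for each, run the decision procedure, outputting those that are satisfiable; this yields a recursive enumeration (in fact the set is recursive).

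The main obstacle I expect is the membership direction, specifically making precise the interaction between the temporal layer and the spatial-metric layer so that the certificate really is of polynomial size. The delicate points are: bounding the temporal horizon $N$ when nested $\sqcup^{*}_{[a,b]}$ operators with their auxiliary $\exists t_1<a,\exists t_2>b$ quantifiers are present (one must check these existential witnesses can be taken from a polynomially bounded discretization of time), and verifying that the endpoint variables $v[x^{-}],v[x^{+}]$ introduced by the neighbour operator need not be chosen from an a priori unbounded numerical range — here the restriction to convex CA/IA relations is exactly what rescues us, since convex constraint networks have the property that consistency can be witnessed rationally and checked in polynomial time. Assembling these observations into a single clean horizon bound, rather than handling the temporal and spatial fragments separately, is where the real work lies; the hardness direction, by contrast, is essentially immediate from the presence of propositional connectives.
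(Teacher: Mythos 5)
Your proposal is correct and, at its core, follows the same route as the paper for the membership direction: exploit the syntactic separation in Definition~\ref{definition: GSTL syntax} (no temporal operators inside spatial terms), unfold the bounded temporal operators over a finite horizon, and reduce to Boolean SAT. But you go genuinely further in three places. First, the paper never argues NP-hardness at all; it only exhibits the reduction \emph{to} SAT and then invokes Cook--Levin, which establishes membership, not completeness. Your one-line reduction \emph{from} propositional SAT (GSTL contains the full propositional fragment) is the missing half and is exactly right. Second, the paper handles spatial terms by replacing each spatial sub-formula $\tau$ with a fresh unconstrained propositional variable $\mu_\tau$; this abstraction silently discards the semantic dependencies among spatial terms (e.g.\ that $\mathbf{N}_A^{\langle b,*,*\rangle}$ and $\mathbf{N}_A^{\langle d,*,*\rangle}$ over the same pair of nodes impose mutually exclusive endpoint inequalities on $v[x^-],v[x^+]$), so strictly speaking it decides satisfiability of the propositional abstraction rather than of the original formula. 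Your certificate, which unfolds the spatial operators over the finite graph guaranteed by Assumption~\ref{assumption: domain clousre} and separately checks global consistency of the chosen convex IA/CA relations by polynomial-time path consistency, is the more faithful argument and is precisely where the convexity restriction earns its keep. Third, you explicitly dispatch the recursive-enumerability claim from decidability, which the paper asserts but does not argue. The one caution is your own flagged obstacle: the existential witnesses $t_1<a$, $t_2>b$ in the until semantics must be drawn from a polynomially bounded discretization (the paper simply takes $a-1$ and $b+1$), and you should commit to such a choice rather than leave it open; with that fixed, your proof is complete and strictly stronger than the paper's.
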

\begin{proof}
From the Eq. \eqref{complexity proof 1} in Definition \ref{definition: GSTL syntax}, we can see that the interactions between spatial operators and temporal operators are rather limited. Temporal operators are not allowed in any spatial terms. Thus, for every GSTL formula $\varphi$ we can construct an new formula $\varphi^*$ by replacing every occurrence of a spatial sub-formula $\tau$ ($\mathbf{P}_A\tau, ~ \mathbf{C}_A \tau, ~ \mathbf{N}_A^{\left \langle *, *, * \right \rangle} \tau$) in $\varphi$ as shown in \eqref{complexity proof 1} with a new propositional variable $\mu_\tau$. Then we obtain a formula without spatial operator as shown below, and it is a bounded STL formula with interval algebra extension.
\begin{equation}
    \varphi = \mu_\tau ~|~ \neg \varphi ~|~ \varphi_1\wedge\varphi_2 ~|~ \varphi_1\vee\varphi_2 ~|~ \Box_{[a,b]} \varphi~| ~ \varphi_1\sqcup_{[a,b]}^{*}\varphi_2.
    \label{GSTL*}
\end{equation}
Now the problem transfer to the complexity of the satisfiability problem for a bounded STL formula with interval algebra extension. We formulate the problem as a Boolean satisfiability problem (SAT) by using SAT encoding recursively according to the definition of GSTL \cite{liu2017distributed}. We show that any GSTL formulas can be reformed in conjunctive normal form (CNF). Here, formulas in CNF are a conjunction of one or more clauses and each clause is a disjunction of literals which are AP or their negation. Based on the definition of CNF, it is straightforward to see that GSTL formulas $\varphi=\mu_r$, $\varphi=\neg\phi$, $\varphi=\varphi_1\wedge\varphi_2$, and $\varphi=\varphi_1\vee\varphi_2$ are already in CNF. For GSTL formula with ``Always" operator $\varphi=\Box_{[a,b]}\phi$, we have $\varphi=\wedge_{i=a}^b\phi_i$ where $\phi_i$ represents $\phi$ at time $i\in[a,b]$. For GSTL formula with the until operator extended with the interval algebra, we give the encoding procedure for $\sqcup_{[a,b]}^o$. The rest can be encoded using the same procedure. $\varphi_1\sqcup_{[a,b]}^{o}\varphi_2$ can be encoded as $\wedge_{i=a}^{b}(\varphi_{1,i}\wedge\varphi_{2,i})\wedge \varphi_{1,a-1} \wedge \varphi_{2,b+1} \wedge \neg\varphi_{1,b+1} \wedge \neg\varphi_{2,a-1}$. From the above encoding procedure, we can see that the SAT encoding resulting a Boolean satisfiability problem. It has been shown that the complexity of a SAT problem is NP-complete according to Cook–Levin theorem \cite{cook1971complexity} which is generally considered as decidable. Existing heuristic SAT-algorithms are able to solve formulas consisting of millions of symbols, which is sufficient for many practical SAT problems.
\end{proof}

\begin{remark}
The restriction that no temporal operators are allowed in the spatial term is reasonable for applications in robotics since normally predicates are used to represent objects such as cups and bowls. We don't expect cups to change to bowls over time. Thus, we don't need any temporal operator in the spatial term and adopt Assumption \ref{assumption: domain clousre}.
\end{remark}

\section{Automated reasoning based on GSTL}\label{Section: deduction system}

In the previous section, we introduced the syntax and semantics definition of GSTL and showed that it is suitable for knowledge representation since the satisfiability of the proposed GSTL is decidable, and it demonstrates reasonably expressive power.  In this section, we focus on automatic reasoning. In particular, we adopt Hilbert style axiomatization for the proposed GSTL, where the proof system is composed of a set of axioms and several inference rules. We choose Hilbert style axiomatization due to its simple set of inference rules and no distinction between formulas and conclusions \cite{gabbay1990axiomatization}. The axioms are generated through a predefined set of axiom schemas, which are defined as below.

\begin{definition}[Axiom schemas]
Axiom schemas are axiom templates that represent infinitely many specific instances of axioms by replacing the variables with any syntax valid formulas. The variables ranging over formulas are called schematic variables.
\end{definition}
For example, for a set of atomic propositions in $\{a,b,c,...\}$, we can get the axiom $(\Box_{[0,\infty)}\neg a \wedge b)\rightarrow (\neg a \wedge b)$ from the axiom schema $\Box_{[0,\infty)}\varphi\rightarrow\varphi$. We denote a set of axiom schemas as $Z$. The axioms we can get depends on both $Z$ and the set of atomic propositions.

\subsection{Axiomatization system}
We define the axiomatization system with a set of axiom schemas and inference rules given below. There are three parts in the axiom schemas, namely propositional logics (P), temporal logics (T), and spatial logics (S). P1 to P10 are axiom schemas from propositional logic. 
\begin{equation}
    \begin{aligned}
    &P1~\neg\neg\varphi \Rightarrow \varphi,\\
    &P2~\varphi_1,\varphi_2 \Leftrightarrow \varphi_1\wedge\varphi_2,\\
    &P3~\varphi_i \Rightarrow \varphi_1\vee\varphi_2,\\
    &P4~\varphi_1\rightarrow(\varphi_2\rightarrow\varphi_1),\\
    &P5~(\phi \rightarrow(\psi \rightarrow \xi)) \rightarrow((\phi \rightarrow \psi) \rightarrow(\phi \rightarrow \xi)),\\
    &P6~(\neg \phi \rightarrow \neg \psi) \rightarrow(\psi \rightarrow \phi),\\
    &P7~\varphi_1\vee\varphi_2, \varphi_1\rightarrow\varphi_3, \varphi_2\rightarrow\varphi_3 \Rightarrow \varphi_3,\\
    &P8~\varphi_1\rightarrow\varphi_2, \varphi_2\rightarrow\varphi_1 \Leftrightarrow \varphi_1\leftrightarrow\varphi_2,\\
    &P9~\neg(\varphi_1\wedge\varphi_2)\Leftrightarrow (\neg\varphi_1)\vee(\neg\varphi_2),\\
    &P10~\neg(\varphi_1\vee\varphi_2)\Leftrightarrow (\neg\varphi_1)\wedge(\neg\varphi_2).
    \end{aligned}
    \label{axiom schemas1}
\end{equation}
T1 to T5 are axiom schemas for temporal logics. 
\begin{equation}
    \begin{aligned}
    &T1~\Box_{[a,b]}(\varphi_1\rightarrow\varphi_2)\Rightarrow \Box_{[a,b]}\varphi_1\rightarrow\Box_{[a,b]}\varphi_2,\\
    &T2~\Box_{[a,b]}(\varphi_1\wedge\varphi_2)\Leftrightarrow\Box_{[a,b]}\varphi_1 \wedge \Box_{[a,b]}\varphi_2,\\
    &T3~\Diamond_{[a,b]}(\varphi_1\wedge\varphi_2)\Rightarrow\Diamond_{[a,b]}\varphi_1 \wedge \Diamond_{[a,b]}\varphi_2\Rightarrow \Diamond_{[a,b]}(\varphi_1\vee\varphi_2),\\
    &T4~(\varphi_1\wedge\varphi_2)\sqcup_{[a,b]}^*\varphi_3 \Leftrightarrow \varphi_1\sqcup_{[a,b]}^*\varphi_3 \wedge \varphi_2\sqcup_{[a,b]}^*\varphi_3,\\
    &T5~\varphi_1\sqcup_{[a,b]}^*(\varphi_2\wedge\varphi_3) \Leftrightarrow \varphi_1\sqcup_{[a,b]}^*\varphi_2 \wedge \varphi_1\sqcup_{[a,b]}^*\varphi_3.
    \end{aligned}
    \label{axiom schemas2}
\end{equation}
S1 to S6 are axiom schemas for spatial logics.
\begin{equation}
    \begin{aligned}
    &S1~\mathbf{P}_A (\varphi_1\wedge\varphi_2) \Leftrightarrow \mathbf{P}_A \varphi_1 \wedge \mathbf{P}_A \varphi_2,\\
    &S2~\mathbf{P}_A (\varphi_1\vee\varphi_2) \Leftrightarrow \mathbf{P}_A \varphi_1 \vee \mathbf{P}_A \varphi_2,\\
    &S3~\mathbf{C}_A (\varphi_1\wedge\varphi_2) \Leftrightarrow \mathbf{C}_A \varphi_1 \wedge \mathbf{C}_A \varphi_2,\\
    &S4~\mathbf{C}_A (\varphi_1\vee\varphi_2) \Leftrightarrow \mathbf{C}_A \varphi_1 \vee \mathbf{C}_A \varphi_2,\\
    &S5~\mathbf{N}_A^{\left \langle *, *, * \right \rangle} (\varphi_1\wedge\varphi_2) \Leftrightarrow \mathbf{N}_A^{\left \langle *, *, * \right \rangle} \varphi_1 \wedge \mathbf{N}_A^{\left \langle *, *, * \right \rangle} \varphi_2,\\
    &S6~\mathbf{N}_A^{\left \langle *, *, * \right \rangle} (\varphi_1\vee\varphi_2) \Leftrightarrow \mathbf{N}_A^{\left \langle *, *, * \right \rangle} \varphi_1 \vee \mathbf{N}_A^{\left \langle *, *, * \right \rangle} \varphi_2.
    \end{aligned}
    \label{axiom schemas3}
\end{equation}
The inference rules are:
\begin{equation}
    \begin{aligned}
    &\textbf{Modus ponens}~ \frac{\varphi_1,\varphi_1\rightarrow\varphi_2}{\varphi_2},\\
    &\textbf{IRR(irreflexivity)}~
    \frac{\mu\vee \Diamond_{[t,b]} \mu\rightarrow\varphi}{\varphi}, \text{for all formulas $\varphi$ and atoms $\mu$ not appearing in $\varphi$ at current time $t$}.
    \end{aligned}
    \label{inference rules}
\end{equation}

\subsection{Properties of the axiomatization system}
Next, we show that the proposed axiomatization system containing axiomatization defined above is sound and complete. 
First, we define a proof for GSTL inferences.
\begin{definition}[Proof]
A proof in GSTL is a finite sequence of GSTL formulas $\varphi_1,\varphi_2,...,\varphi_n$, where each of them is an axiom or there exists $i,j<k$, such that $\varphi_k$ is the conclusion derived from $\varphi_i$ and $\varphi_j$ using the inference rules in the axiomatization system. The GSTL formula $\varphi_n$ is the conclusion of the proof, and $n$ is the length of the proof.
\end{definition}

Now, we discuss the soundness of the proposed axiomatization system.
Informally soundness means if in all formulas that are possible given a set of formulas $\Sigma$, the formula $\varphi$ also holds. Formally we say the axiomatization system is sound if a set $\Sigma$ of well-formed formulas semantically implies a certain well-formed formula $\varphi$ if all truth assignments that satisfy all the formulas in $\Sigma$ also satisfy $\varphi$.
\begin{theorem}[Soundness]\label{theorem:soundness}
The above axiomatization system is sound for the given spatial temporal signals $x(v,t,\iota)$.
\end{theorem}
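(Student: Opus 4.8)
The plan is to prove soundness in the standard way for a Hilbert-style system: show that every axiom (that is, every instance of an axiom schema in $Z$) is valid with respect to the semantics of GSTL, and that each inference rule preserves validity; then soundness follows by induction on the length of a proof. Concretely, I would say a GSTL formula $\varphi$ is \emph{valid} if $x(v,t,\iota)\models\varphi$ for every spatial temporal signal $x(v,t,\iota)$, every node $v$, and every time $t$ (and, for the derivability-from-$\Sigma$ version, I would carry along the hypothesis that all formulas of $\Sigma$ hold at the relevant point).

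First I would dispatch the propositional schemas P1--P10: each is validated by a routine truth-table / case analysis directly from clauses (2)--(4) of the qualitative semantics, since $\neg,\wedge,\vee$ and the abbreviations $\rightarrow,\leftrightarrow$ are interpreted classically at a fixed $(v,t,\iota)$. Next the temporal schemas T1--T5: T1 and T2 follow from the semantics of $\Box_{[a,b]}$ (clause 5) because a universal quantifier over $t'\in[t+a,t+b]$ distributes over conjunction and respects implication; T3 follows dually from the semantics of $\Diamond_{[a,b]}$ (clause 6); and T4, T5 follow by unwinding the interval-algebra–decorated until operators $\sqcup_{[a,b]}^{*}$ — each is defined (in the GSTL semantics) as a conjunction of $\Box$-formulas over the arguments $\varphi_1,\varphi_2,\varphi_3$, so distributivity of $\wedge$ through those $\Box$'s gives the equivalences. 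The spatial schemas S1--S6 are handled the same way: $\mathbf{P}_A$, $\mathbf{C}_A$, $\mathbf{N}_A^{\langle *,*,*\rangle}$ are all defined by a universal quantifier over the relevant set of parent / child / neighbor nodes (with the metric side-conditions on $v_n[x^\pm]$ unchanged by the Boolean structure of the argument), so each commutes with $\wedge$ and $\vee$ on spatial terms $\tau$. For the inference rules, Modus ponens preserves validity because $\varphi_1$ and $\varphi_1\rightarrow\varphi_2$ holding at $(v,t,\iota)$ forces $\varphi_2$ there by the semantics of $\rightarrow$. Then induction on proof length: a proof is a finite list each of whose entries is an axiom (hence valid) or obtained by a rule from earlier valid entries (hence valid), so the conclusion $\varphi_n$ is valid, which is exactly soundness.

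I expect the IRR rule to be the main obstacle. IRR is not truth-preserving in the naive pointwise sense — it licenses concluding $\varphi$ from $(\mu\vee\Diamond_{[t,b]}\mu)\rightarrow\varphi$ only under the proviso that $\mu$ is a "fresh" atom not occurring in $\varphi$ — so its soundness rests on the irreflexivity of the flow of time $(T,<)$ (Definition of flow of time) together with a renaming/elimination argument: given any signal witnessing $\Sigma$ but refuting $\varphi$, one extends or modifies the interpretation $\iota$ on the fresh atom $\mu$ so that $\mu\vee\Diamond_{[t,b]}\mu$ fails at the current point while leaving the truth value of $\varphi$ untouched (legitimate precisely because $\mu\notin\varphi$), contradicting the premise; the irreflexive ordering is what guarantees a point can be distinguished from its strict future, making such a $\mu$ realizable. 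I would therefore isolate this as a separate lemma — "for fresh $\mu$, if $(\mu\vee\Diamond_{[t,b]}\mu)\rightarrow\varphi$ is valid then $\varphi$ is valid" — and prove it by a signal-surgery argument on the fresh atom, using Assumption~\ref{assumption: domain clousre} to keep the domain fixed, and the connectedness/irreflexivity of $(T,<)$ to place the witness; the remaining schemas, as sketched above, are routine.
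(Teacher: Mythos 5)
Your overall skeleton---validate every instance of each axiom schema, show each rule preserves validity, then induct on the length of the proof---is the same as the paper's, and your handling of P1--P10, T1--T5, S1/S3/S5 and Modus ponens is essentially what the paper does (the paper merely asserts that the schemas ``preserve semantic implication,'' whereas you at least indicate the quantifier-distribution arguments). The genuine divergence, and the genuine gap, is in the IRR rule. To show IRR preserves validity you assume $(\mu\vee\Diamond_{[t,b]}\mu)\rightarrow\varphi$ is valid, suppose for contradiction that $\varphi$ fails at some $(v,t)$ under some signal, and then you must modify $\iota$ on the fresh atom $\mu$ so that the \emph{antecedent becomes true} at that point (e.g.\ set $\mu$ true at $(v,t)$, so $\mu\vee\Diamond_{[t,b]}\mu$ holds there); freshness of $\mu$ guarantees the truth value of $\varphi$ is unchanged, so the implication is falsified at $(v,t)$, contradicting its validity. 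Your surgery goes the other way---you arrange for $\mu\vee\Diamond_{[t,b]}\mu$ to \emph{fail} at the current point---which makes the implication vacuously true there and contradicts nothing. The fix is one line, but as written that step does not go through. For comparison, the paper instead rewrites the premise as $(\neg\mu\wedge\Box_{[t,b]}\neg\mu)\vee\varphi$ and argues that the first disjunct can never be satisfied, so $\varphi$ must hold; your freshness-and-surgery argument, once the polarity is corrected, is the standard justification of an irreflexivity rule and is the only one of the two that actually uses the side condition that $\mu$ does not occur in $\varphi$.

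A secondary caveat: you justify S1--S6 by saying that $\mathbf{P}_A$, $\mathbf{C}_A$, $\mathbf{N}_A^{\langle *,*,*\rangle}$ ``commute with $\wedge$ and $\vee$'' because each is a universal quantifier over the relevant nodes. A universal quantifier distributes over $\wedge$ but not over $\vee$, so your argument establishes S1, S3, S5 and only one direction of S2, S4, S6; the biconditional $\vee$-distribution schemas are validated only when the index set $A$ is a singleton. The paper's blanket assertion has the same latent issue, but since your proof explicitly rests on this distribution you should either restrict $A$ or weaken those schemas to the sound implication.
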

\begin{proof}
This can be proved from the fact that all axioms' schemas from propositional logic, temporal logic, and spatial logic are valid, and all rules preserve validity. To prove soundness, we aim to prove the following statement. Given a set of GSTL formulas $\Sigma$, any formula $\phi$, which can be inferred from the axiomatization system above, is correct, meaning all truth assignments that satisfy $\Sigma$ also satisfy $\phi$. 

The proof is done by induction. First, if $\phi\in \Sigma$, then it is trivial to say $\phi$ is correct. Second, if $\phi$ belongs to one of the axiom schemas, it is also trivial to say $\phi$ is correct since all axiom schemas defined in \eqref{axiom schemas1}-\eqref{axiom schemas3} preserve semantic implication based on the semantic definition of GSTL. We need to further prove the inference rules are also sound.

Let us assume if $\phi_i$ can be proved by $\Sigma$ in $n$ steps in a proof, then $\phi_i$ is implied by $\Sigma$. For each possible application of a rule of inference at step $i + 1$, leading to a new formula $\phi_j$, if we can show that $\phi_j$ can be implied by $\Sigma$, we prove the soundness. If Modus ponens is applied, then $\phi_i=\varphi_1\wedge(\varphi_i\rightarrow\varphi_2)$ and $\phi_j=\varphi_2$. Let $x(v,t,\iota)$ be a spatial temporal signal with $t\in(T,<)$ being a flow of time with a connected partial ordering and $\iota$ being the interpretation function. Assuming $x(v,t,\iota)\models\phi_i$, we need to prove that $x(v,t,\iota)\models\phi_j$. According to the semantic definition of GSTL, $x(v,t,\iota)\models\phi_i\Leftrightarrow x(v,t,\iota)\models\varphi_1 ~\text{and}~ x(v,t,\iota)\models\varphi_1\rightarrow\varphi_2$. Since $\varphi_1\rightarrow\varphi_2\Leftrightarrow\neg\varphi_1\vee\varphi_2$, $x(v,t,\iota)\models\varphi_2$ which proves Modus pones is sound. If IRR is applied, then $\phi_i=(\mu\vee\Diamond_{[t,b]}\mu)\rightarrow\varphi$ and $\phi_j=\varphi$. Assuming $x(v,t,\iota)\models\phi_i$, we need to prove that $x(v,t,\iota)\models\phi_j$. According to the semantic definition of GSTL, the following equation hold true.
\begin{equation}
\begin{aligned}
    x(v,t,\iota)&\models(\mu\vee\Diamond_{[t,b]}\mu)\rightarrow\varphi \\
    &\models \neg(\mu\vee\Diamond_{[t,b]}\mu)\vee\varphi\\
    &\models (\neg\mu \wedge \Box_{[t,b]}\neg\mu) \vee \varphi.
\end{aligned}
\end{equation}
Since $x(v,t,\iota)\models\Box_{[t,b]}\mu \Leftrightarrow \forall t'\in [t,b], x(v,t',\iota)\models\mu$, thus $x(v,t,\iota)\not\models\neg\mu \wedge \Box_{[t,b]}\neg\mu$. Thus, $x(v,t,\iota)\models\varphi$ has to be true. That proves that the inference rule IRR is also sound. This completes the proof.

\end{proof}

Before we discuss completeness, we give several necessary definitions on terms that will be used later.

\begin{definition}
Definition of signature, theory, prove, $Z$-consistent, complete $L$-theory, and IRR model are given below.
\begin{enumerate}
    \item Signature: A signature $L$ is a countable set containing all atomic propositions.
    \item Theory: A $L$-theory is a set of GSTL formulas where all atomic propositions are from signature $L$.
    \item Prove: A theory $\Delta$ prove $\varphi$, denoted as $\Delta\vdash\varphi$, if for some finite $\Delta_0\subseteq\Delta$, $\vdash(\wedge\Delta_0)\rightarrow\varphi$.
    \item  $Z$-consistent theory: Denote $Z$ as a set of schemas. A theory $\Delta$ is said to be $Z$-consistent, or just consistent if $Z$ is understood, if and only if $\Delta\not\vdash\bot$.
    \item Complete $L$-theory: Let $L$ be a signature containing all atomic propositions. A theory $\Delta$ is said to be a complete $L$-theory if for all formulas $\varphi$, either $\varphi\in\Delta$ or $\neg\varphi\in\Delta$. $\Delta$ is said to be an IRR theory if for some atom $\mu$, $\mu\wedge\Box_{[a,b]}\neg\mu\in\Delta$ and if $\Diamond_{[a,b]}(\varphi_1\wedge(\varphi_2\wedge ... \Diamond_{[a,b]}\varphi_m)...)\in\Delta$, then $\Diamond_{[a,b]}(\varphi_1\wedge(\varphi_2\wedge ... \Diamond_{[a,b]}\varphi_m\wedge(\mu\vee\Box_{[t,b]}\mu))...)\in\Delta$.
    \item IRR model: A model $N=(S,\sqsubset)$ for any Z-consistent set of formulas is an IRR model if every instance of the Z-schema is valid in $N$ and $\mu\vee\Diamond_{[t,b]}$ is valid in $N$ for all $t$.
\end{enumerate}
\end{definition}

We first prove the following lemma stating the fact that for any $Z$-consistent theory, there is a set of complete $L$-theory containing it.
\begin{lemma}
Let $\Sigma$ be a Z-consistent theory with $L$ being its signature. Let $L^*\supseteq  L$ be an extension of $L$ by a countable infinite set of atoms. Then there is a Z-consistent complete IRR $L^*$-theory $\Delta$ containing $\Sigma$.
\label{lemma1}
\end{lemma}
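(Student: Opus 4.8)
The plan is to prove this by a Lindenbaum-style construction adapted to the IRR rule, following the standard pattern used for temporal logics with an irreflexivity rule (Gabbay's technique). The goal is to build $\Delta \supseteq \Sigma$ step by step, enumerating all formulas of the enlarged signature $L^*$ and deciding each one, while simultaneously using the fresh atoms of $L^* \setminus L$ as "names" to witness the diamond-formulas so that the resulting theory is IRR.

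\emph{First} I would fix an enumeration $\varphi_0, \varphi_1, \varphi_2, \ldots$ of all $L^*$-formulas, and an enumeration of the countably infinitely many fresh atoms $\mu_0, \mu_1, \ldots \in L^* \setminus L$. I would then define an increasing chain of $Z$-consistent $L^*$-theories $\Sigma = \Delta_0 \subseteq \Delta_1 \subseteq \cdots$ as follows. At stage $k+1$, given $\Delta_k$ which is $Z$-consistent and uses only finitely many of the fresh atoms, I look at $\varphi_k$: if $\Delta_k \cup \{\varphi_k\}$ is $Z$-consistent, put it in; otherwise put $\neg\varphi_k$ in (one of the two must be consistent, by the usual argument). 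If the formula just added was of the form $\Diamond_{[a,b]}\psi$ (or more generally, nested diamonds as in the definition of IRR theory), I additionally "decorate" it by conjoining a clause $\mu \vee \Box_{[t,b]}\mu$ inside the innermost diamond for some fresh atom $\mu$ not yet used and not occurring in $\Delta_k$ or $\varphi_k$. Finally set $\Delta = \bigcup_k \Delta_k$.

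\emph{The two things to check} are that $\Delta$ is complete (immediate: every $\varphi_k$ was decided at stage $k+1$), and that $\Delta$ is $Z$-consistent and IRR. $Z$-consistency of $\Delta$ follows from $Z$-consistency of each $\Delta_k$ together with the fact that a proof is finite, hence uses only finitely many formulas, hence lives in some $\Delta_k$. The crucial point — and the \textbf{main obstacle} — is showing that the decoration step preserves $Z$-consistency: if $\Delta_k \cup \{\Diamond_{[a,b]}(\ldots \Diamond_{[a,b]}\psi \ldots)\}$ is $Z$-consistent, then so is $\Delta_k \cup \{\Diamond_{[a,b]}(\ldots \Diamond_{[a,b]}(\psi \wedge (\mu \vee \Box_{[t,b]}\mu)) \ldots)\}$ for a genuinely fresh $\mu$. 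This is exactly where the IRR \emph{inference rule} is used (not just the schemas): were the decorated theory inconsistent, one could derive $\neg\Diamond_{[a,b]}(\ldots(\mu \vee \Box_{[t,b]}\mu)\ldots)$ from $\Delta_k$, i.e. $\Delta_k \vdash (\mu \vee \Diamond_{[t,b]}\mu) \rightarrow \chi$ for the appropriate $\chi$ derived from the $\Delta_k$-contradiction, and then since $\mu$ does not occur in $\Delta_k$ or in $\chi$, the IRR rule yields $\Delta_k \vdash \chi$, contradicting the assumed $Z$-consistency of the undecorated extension. Freshness of $\mu$ is what licenses the rule application, which is why an infinite reservoir of new atoms is needed.

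\emph{To finish}, I would verify the IRR-theory clauses directly from the construction: the decoration step guarantees that whenever a (nested) diamond formula is in $\Delta$, its decorated version with some $\mu \wedge \Box_{[a,b]}\neg\mu$-style witness is also in $\Delta$, and completeness plus consistency ensure these are coherent. One should also note that adding fresh atoms to the signature never endangers $Z$-consistency of $\Sigma$ itself (a conservativity observation), so $\Delta_0 = \Sigma$ is a legitimate starting point. The lemma then follows, and it is precisely the hook needed afterwards to build the canonical IRR model $N = (S, \sqsubset)$ out of the complete IRR theories for the completeness theorem.
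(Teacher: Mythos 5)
Your proposal follows essentially the same Lindenbaum-style construction as the paper: enumerate all $L^*$-formulas, decide each one while decorating nested diamond formulas with fresh atoms from $L^*\setminus L$, and take the union; in fact you supply the key justification --- that the decoration step preserves $Z$-consistency via an application of the IRR rule to a genuinely fresh atom --- which the paper's own proof merely asserts. The only difference is cosmetic: the paper seeds the construction with $\Delta_0=\Sigma\cup\{\mu\vee\Diamond_{[t,b]}\mu\}$ for a fresh $\mu$ so as to secure the first clause of the IRR-theory definition, whereas you start from $\Delta_0=\Sigma$; this is easily repaired and does not change the argument.
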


\begin{proof}
We prove the lemma by inductively building a series of Z-consistent $L^*$-theories $\Delta_i, i<\omega$. Then their union $\Delta=\cup_i\Delta_i$ is the Z-consistent complete IRR $L^*$-theory $\Delta$.

Let $\mu$ be an atom of $L^*\setminus L$. Then $\mu$ does not appear in $\Sigma$. According to the proof in Theorem \ref{theorem:soundness}, $\Sigma\cup\{\mu\vee\Diamond_{[t,b]}\mu\}$ is Z-consistent. Set $\Delta_0=\Sigma\cup\{\mu\vee\Diamond_{[t,b]}\mu\}$. We denote all $L^*$-formulas as $\varphi_0,\varphi_1,...$. We assume $\Delta_i$ is constructed inductively and is a Z-consistent theory. We expend $\Delta_i$ through the following steps.
\begin{enumerate}
    \item If $\Delta_i\cup\{\varphi_i\}$ is not a Z-consistent theory, then $\Delta_{i+1}=\Delta_i\cup\{\neg\varphi_i\}$. $\Delta_{i+1}$ is a Z-consistent theory since the definition of the $Z$-consistent theory and  $\Delta_i$ is a Z-consistent theory based on our assumption.
    \item If $\Delta_i\cup\{\varphi_i\}$ is a Z-consistent theory and $\varphi_i$ is not of the form $\Diamond_{[a,b]}(\phi_1\wedge(\phi_2\wedge ... \Diamond_{[a,b]}\phi_m)...)$, then $\Delta_{i+1}=\Delta_i\cup\{\varphi_i\}$.
    \item If $\Delta_i\cup\{\varphi_i\}$ is a Z-consistent theory and $\varphi_i$ is of the form $\Diamond_{[a,b]}(\phi_1\wedge(\phi_2\wedge ... \Diamond_{[a,b]}\phi_m)...)$, then $\Delta_{i+1}=\Delta_i\cup\{\varphi_i,\Diamond_{[a,b]}(\phi_1\wedge(\phi_2\wedge ... \Diamond_{[a,b]}\phi_m\wedge(\mu\vee\Diamond_{[t,b]}\mu))...)$ which is a Z-consistent theory.
\end{enumerate}
Based on the definition of complete $L$-theory and IRR theory, $\Delta=\cup_{i<\omega}\Delta_i$ is a Z-consistent complete IRR $L^*$-theory.
\end{proof}

Another lemma is needed to prove the completeness, which states the fact that there exists an IRR model for a complete $L$-theory.
\begin{lemma}
For all $Z$-schema instances $\varphi\in\Delta$, we have
\begin{equation}
   \varphi ~\text{is valid in}~ N=(S,\sqsubset), \Delta\in S \Leftrightarrow \varphi\in\Delta.
\end{equation}
where $\Delta$ is a complete Z-consistent IRR L-theory. $N$ is an L-structure $N=(S,\sqsubset)$, where $\sqsubset$ is a binary relation \cite{gabbay1990axiomatization} on the set of all complete Z-consistent IRR L-theory defined as
\begin{equation}
    \Delta_1\sqsubset\Delta_2, \text{if and only if for all L-formulas, if~} \Box_{[a,b]}\varphi\in\Delta_1, \text{then~} \varphi\in\Delta_2.
\end{equation}
\label{lemma2}
\end{lemma}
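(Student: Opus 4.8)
The plan is to prove this by induction on the structure of the formula, following the standard canonical-model ``Truth Lemma'' argument, with the valuation of $N$ fixed so that an atom $\mu$ holds at a point $\Delta\in S$ precisely when $\mu\in\Delta$. Since in GSTL spatial operators never occur inside temporal operators and, by Assumption \ref{assumption: domain clousre}, spatial terms do not vary over time, I would first apply the substitution used in the proof of Theorem \ref{theorem:complexity}: replace every maximal spatial subterm $\tau$ by a fresh atom $\mu_\tau$, so that it suffices to establish the equivalence for the purely temporal fragment generated by $\mu$, $\neg$, $\wedge$, $\vee$, $\Box_{[a,b]}$, $\sqcup_{[a,b]}^{*}$ over the structure $N=(S,\sqsubset)$; the distributivity schemas S1--S6 then transport the result back to formulas containing $\mathbf{P}_A$, $\mathbf{C}_A$, $\mathbf{N}_A^{\langle *,*,*\rangle}$.

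For the base case, $\mu$ holds at $\Delta$ iff $\mu\in\Delta$ by the definition of the valuation. For the Boolean connectives I would use that $\Delta$ is a complete $Z$-consistent $L^*$-theory: by P1 together with $Z$-consistency, exactly one of $\varphi$, $\neg\varphi$ lies in $\Delta$; by P2, P3, P7 together with completeness, $\varphi_1\wedge\varphi_2\in\Delta$ iff $\varphi_1\in\Delta$ and $\varphi_2\in\Delta$, and dually for $\vee$. Combined with the induction hypothesis this dispatches $\neg$, $\wedge$, and $\vee$.

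The modal cases are the core of the argument. For $\Box_{[a,b]}\varphi$, the left-to-right direction is immediate from the definition of $\sqsubset$: if $\Box_{[a,b]}\varphi\in\Delta$ and $\Delta\sqsubset\Delta'$ then $\varphi\in\Delta'$, so $N,\Delta'\models\varphi$ by the induction hypothesis, whence $N,\Delta\models\Box_{[a,b]}\varphi$. For the converse I would prove the usual existence lemma: if $\Box_{[a,b]}\varphi\notin\Delta$, then the theory $\{\psi : \Box_{[a,b]}\psi\in\Delta\}\cup\{\neg\varphi\}$ is $Z$-consistent --- this is where T1 and T2 do the work, playing the role of the normality axiom --- and by Lemma \ref{lemma1} it extends to a complete $Z$-consistent IRR $L^*$-theory $\Delta'\in S$ with $\Delta\sqsubset\Delta'$ and $\neg\varphi\in\Delta'$, so $N,\Delta\not\models\Box_{[a,b]}\varphi$ by the induction hypothesis. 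The operator $\Diamond_{[a,b]}$ reduces to this case through the equivalence $\Diamond_{[a,b]}\varphi\equiv\neg\Box_{[a,b]}\neg\varphi$ together with the Boolean cases, and each $\sqcup_{[a,b]}^{*}$ reduces to a Boolean combination of $\Box$-formulas via its semantic definition (and the unfoldings reflected in T4--T5), so no new ideas are required there.

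The main obstacle is making the existence lemma go through in the metric, irreflexive setting rather than in a bare Kripke frame: one must verify $Z$-consistency of the witnessing theory for the correct interval $[a,b]$, tracking how intervals compose under nested $\Box$, and --- more delicately --- one must exploit that the IRR inference rule is exactly what forces $\sqsubset$ restricted to $S$ to behave like a connected, irreflexive flow of time, so that the diamond-witnesses demanded by the qualitative semantics are realised at strictly later points. This last point is precisely the content of Lemma \ref{lemma1}, which supplies an extension $\Delta'$ that is still an IRR theory; without it the construction would collapse. Once Lemma \ref{lemma1} and the present lemma are available, completeness follows by the standard argument: a $Z$-consistent $\Sigma$ extends to some $\Delta\in S$, and by this lemma $N$ with distinguished point $\Delta$ is a model of $\Sigma$.
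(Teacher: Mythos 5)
Your overall strategy is the same as the paper's: an induction on the structure of $\varphi$ over the canonical structure $N=(S,\sqsubset)$, with atoms handled by the valuation, Boolean connectives by completeness and consistency of $\Delta$, and the until operators reduced to Boolean combinations of $\Box_{[a,b]}$-formulas via their semantic unfoldings. The substantive difference is at the $\Box_{[a,b]}$ case. The paper treats both directions by simply asserting that a suitable successor $\Delta_2$ with $\Delta_1\sqsubset\Delta_2$ exists (and, in the validity-to-membership direction, even infers $\neg\varphi\in\Delta_2$ from $\neg\Box_{[a,b]}\varphi\in\Delta_1$, which does not follow from the definition of $\sqsubset$); you instead get membership-to-validity directly from the definition of $\sqsubset$ and supply an explicit existence lemma for the converse: if $\Box_{[a,b]}\varphi\notin\Delta$ then $\{\psi:\Box_{[a,b]}\psi\in\Delta\}\cup\{\neg\varphi\}$ is $Z$-consistent and extends via Lemma \ref{lemma1} to a successor refuting $\varphi$. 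This is the standard, and more defensible, canonical-model argument; the paper's proof tacitly assumes exactly this lemma. You also add a preprocessing step (replacing maximal spatial subterms by fresh atoms, as in the proof of Theorem \ref{theorem:complexity}) to handle the spatial operators, which the paper's proof of this lemma ignores entirely. Two caveats apply equally to your sketch and to the paper's: the existence lemma as you state it needs a necessitation-style step (from $\vdash\psi_1\wedge\dots\wedge\psi_n\rightarrow\varphi$ to $\vdash\Box_{[a,b]}\psi_1\wedge\dots\wedge\Box_{[a,b]}\psi_n\rightarrow\Box_{[a,b]}\varphi$), and the axiomatization lists only Modus ponens and IRR as rules, so T1 alone does not deliver it; and since $\sqsubset$ is a single relation not indexed by intervals, the interval bookkeeping you flag as ``the main obstacle'' is genuinely unresolved in both accounts.
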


\begin{proof}
It is proved by the induction of building $\varphi$. If $\varphi=\mu$, then based on the semantic definition of GSTL, the lemma is satisfied. If $\varphi=\neg\mu$ and $\varphi=\mu_1\wedge\mu_2$, it is obvious the lemma is satisfied based on the definition of completeness of $\Delta$.

Now assume $\varphi$ satisfies $\varphi$ is valid in $N$ if and only if $\varphi\in\Delta$. We first prove the lemma holds for $\Box_{[a,b]}\varphi$. 1) If $\Box_{[a,b]}\varphi$ is valid in $N$ for $\Delta_1\in S$, then there is $\Delta_2\in S$ with $\Delta_1\sqsubset\Delta_2$ and $\varphi$ is valid in $N$ for $\Delta_2$. Since $\varphi$ satisfies $\varphi$ is valid in $N$ if and only if $\varphi\in\Delta$, we have $\varphi\in\Delta_2$. If $\neg\Box_{[a,b]}\varphi\in\Delta_1$, then based on the definition of $\sqsubset$, $\neg\varphi\in\Delta_2$ which contradict our assumption. Thus, $\Box_{[a,b]}\varphi\in\Delta_1$. 2) Now assume that $\Box_{[a,b]}\varphi\in\Delta_1$. Then there is $\Delta_2\in S$ with $\Delta_1\sqsubset\Delta_2$ and $\varphi\in\Delta_2$ where $S$ is the set of all complete Z-consistent IRR L-theories. Since $\Delta_2\in S$ and $\varphi$ is valid in $N$ for $\Delta_2$ according to our inductive hypothesis,  $\Box_{[a,b]}\varphi$ is valid in $N$ for $\Delta_1$ based on the semantic definition of $\Box_{[a,b]}\varphi$.

As for the until operator with IA extension, we prove $\varphi_1\sqcup_{[a,b]}^o\varphi_2$ as an example and leave the rest to readers. They can all be proved using the same procedure. Based on the semantic definition of $\sqcup_{[a,b]}^o$, $\varphi_1\sqcup_{[a,b]}^o\varphi_2=\Box_{[a,b]}(\varphi_1\wedge\varphi_2)\wedge \Box_{[t_1,a]}(\varphi_1\wedge\neq\varphi_2)\wedge \Box_{[b,t_2]}(\neg\varphi_1\wedge\varphi_2)$ for some $t_1<a<b<t_2$. As we already show that the lemma holds true for $\varphi_1\wedge\varphi_2$ and $\Box_{[a,b]}\varphi$ inductively, the lemma holds true for $\Box_{[a,b]}(\varphi_1\wedge\varphi_2)\wedge \Box_{[t_1,a]}(\varphi_1\wedge\neg\varphi_2)\wedge \Box_{[b,t_2]}(\neg\varphi_1\wedge\varphi_2)$ as well.
\end{proof}

Now we can have the following theorem stating that if $\Sigma$ is a Z-consistent theory, then $\Sigma$ has an IRR modal where every instance of Z-schema is valid.

\begin{theorem}
Let $L$ be a countable infinite signature and $\Sigma$ be a Z-consistent L-theory. Then there are countable $L\subseteq L^*$ and an IRR model $N$ of $\Sigma$ such that every $L^*$-axiom of $Z$ is valid in $N$.
\label{theorem1}
\end{theorem}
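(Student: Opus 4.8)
The plan is to assemble Theorem~\ref{theorem1} directly from the two preceding lemmas, in the standard Henkin/canonical-model style adapted to the IRR rule. First I would invoke Lemma~\ref{lemma1}: since $\Sigma$ is a $Z$-consistent $L$-theory, fix a countable extension $L^*\supseteq L$ obtained by adjoining a fresh countably infinite set of atoms, and obtain a $Z$-consistent complete IRR $L^*$-theory $\Delta\supseteq\Sigma$. This $\Delta$ will serve as the distinguished point of the model. Next I would let $S$ be the set of \emph{all} complete $Z$-consistent IRR $L^*$-theories and equip it with the binary relation $\sqsubset$ of Lemma~\ref{lemma2}, namely $\Delta_1\sqsubset\Delta_2$ iff $\Box_{[a,b]}\varphi\in\Delta_1$ implies $\varphi\in\Delta_2$; set $N=(S,\sqsubset)$. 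Note $S\neq\emptyset$ because $\Delta\in S$.

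The core step is the truth lemma, which is exactly Lemma~\ref{lemma2}: for every $L^*$-formula $\varphi$ and every $\Gamma\in S$, $\varphi$ is valid in $N$ at $\Gamma$ iff $\varphi\in\Gamma$. Applying this at the point $\Delta$ and using $\Sigma\subseteq\Delta$, every formula of $\Sigma$ is valid in $N$, so $N$ is a model of $\Sigma$. It then remains to check the two defining conditions of an IRR model. Validity of every instance of the $Z$-schema follows because each $\Gamma\in S$ is complete and $Z$-consistent, hence must contain every $Z$-schema instance (otherwise its negation lies in $\Gamma$, contradicting $Z$-consistency since the instance is derivable), so by the truth lemma every such instance is valid throughout $N$. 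Validity of $\mu\vee\Diamond_{[t,b]}\mu$ for all $t$ is guaranteed by the IRR clause in the definition of an IRR theory together with the construction in Lemma~\ref{lemma1}, where the fresh atom $\mu\in L^*\setminus L$ was explicitly forced into $\Delta_0$ and propagated; since every $\Gamma\in S$ is an IRR theory, the relevant witnessing instance lies in $\Gamma$, and the truth lemma again gives validity in $N$.

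Two auxiliary facts must be threaded through for the argument to be watertight, and these are where the real work sits. The first is the \emph{existence} (or ``diamond'') property: if $\Diamond_{[a,b]}\varphi\in\Delta_1$ then there is $\Delta_2\in S$ with $\Delta_1\sqsubset\Delta_2$ and $\varphi\in\Delta_2$ — this is what powers the $\Box$/$\Diamond$ induction steps of Lemma~\ref{lemma2}, and it relies on being able to complete $\{\varphi\}\cup\{\psi:\Box_{[a,b]}\psi\in\Delta_1\}$ to a member of $S$, i.e. on re-running the Lindenbaum-style construction of Lemma~\ref{lemma1} inside the model. The second, and the step I expect to be the main obstacle, is verifying that $(S,\sqsubset)$ is a legitimate flow of time in the sense required by the GSTL semantics — that $\sqsubset$ is an irreflexive, connected partial order over which the interval-indexed operators $\Box_{[a,b]}$, $\Diamond_{[a,b]}$ and $\sqcup_{[a,b]}^*$ can be interpreted. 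Irreflexivity is precisely what the IRR inference rule buys us: if some $\Gamma$ had $\Gamma\sqsubset\Gamma$, the atom witnessing the IRR clause would give $\mu\wedge\Box_{[a,b]}\neg\mu\in\Gamma$ while $\sqsubset$-reflexivity forces $\neg\mu$ and $\mu$ simultaneously into $\Gamma$, contradicting $Z$-consistency. Connectedness and the metric interval indexing require identifying the canonical relation with a genuine discrete time structure respecting the bounds $[a,b]$, which has to be done with some care; once that identification is in place, the bounded-time restriction on GSTL formulas ensures the induction in Lemma~\ref{lemma2} terminates and the model $N$ is well defined, completing the proof.
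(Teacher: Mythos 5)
Your proposal is correct and follows essentially the same route as the paper's own proof: fix the extension $L^*$, use Lemma~\ref{lemma1} to embed $\Sigma$ in a complete $Z$-consistent IRR $L^*$-theory $\Delta$, take $S$ to be all such theories with the relation $\sqsubset$, and conclude via the truth lemma (Lemma~\ref{lemma2}) that $N=(S,\sqsubset)$ is an IRR model of $\Sigma$ validating every $Z$-axiom. The auxiliary obligations you flag — the existence (diamond) property needed inside Lemma~\ref{lemma2} and the verification that $(S,\sqsubset)$ is an irreflexive connected flow of time compatible with the interval indexing — are genuine and are in fact left implicit in the paper's argument, so your version is, if anything, the more careful rendering of the same proof.
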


\begin{proof}
The proof is based on Lemma \ref{lemma1} and Lemma \ref{lemma2}. Let us assume $\Sigma$ to be Z-consistent. Let $L^*$ to be augmented $L$ with countable infinite new atoms. Let $S$ be the set of all Z-consistent complete IRR $L^*$ theories. According to Lemma \ref{lemma1}, there is $\Delta\in S$ containing $\Sigma$. We define an IRR model $N=(S,\sqsubset)$ the same as the one in Lemma \ref{lemma2}, where $S$ is the set of all Z-consistent complete IRR $L^*$ theories and $\sqsubset$ is the binary relations on the theories in $S$. Then based on Lemma \ref{lemma2}, $N$ is an IRR model, and for all instances of Z-schemas are in every $\Delta\in S$, they are all valid in $N$. 
\end{proof}

An inference system for logic is complete with respect to its semantics if it produces a proof for each provable statement. If the semantics of a set of GSTL formulas $\Sigma$ implies $\varphi$, then $\varphi$ is proved by $\Sigma$. We formally defined completeness as follow.
\begin{definition}
The axiomatization system composed with axiom schemas $Z$ and inference rules is said to be complete if for all theories $\Sigma$, $\Sigma$ is Z-consistent if and only if there is an IRR model $N$ with the flow of time in $(T,<)$ such that $N$ is a model of $\Sigma$.
\end{definition}

The axiomatization system satisfies the following property.
\begin{theorem}[Completeness]
The axiomatization system for GSTL is complete with an IRR model $N$ such that $N$ is a model of $\Sigma$.
\end{theorem}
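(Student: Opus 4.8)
The plan is to establish both directions of the biconditional in the definition of completeness, with essentially all of the work residing in producing a model for an arbitrary Z-consistent theory. The converse direction is immediate from soundness: if $\Sigma$ admits an IRR model $N$ with flow of time $(T,<)$ but $\Sigma$ were not Z-consistent, then $\Sigma\vdash\bot$, and Theorem \ref{theorem:soundness} would force $N$ to satisfy $\bot$, which is impossible. So I would dispose of that direction in one line and concentrate on showing that Z-consistency suffices.

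For the forward direction I would assemble the apparatus already in place. Starting from a Z-consistent $L$-theory $\Sigma$, Lemma \ref{lemma1} extends it over an enlarged signature $L^*$ (adding countably many fresh atoms) to a complete Z-consistent IRR $L^*$-theory $\Delta\supseteq\Sigma$. I then build the canonical structure $N=(S,\sqsubset)$ as in Theorem \ref{theorem1}, where $S$ is the set of all complete Z-consistent IRR $L^*$-theories and $\Delta_1\sqsubset\Delta_2$ iff $\Box_{[a,b]}\varphi\in\Delta_1$ implies $\varphi\in\Delta_2$. The truth lemma, Lemma \ref{lemma2}, gives $N,\Delta'\models\varphi$ iff $\varphi\in\Delta'$ for every $\Delta'\in S$ and every formula $\varphi$; instantiating at $\Delta$ yields $N,\Delta\models\Sigma$. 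It remains to verify the two clauses in the definition of an IRR model: validity in $N$ of every instance of the $Z$-schemas, which is immediate because each $\Delta'\in S$ is complete and Z-consistent and hence contains all such instances; and validity of $\mu\vee\Diamond_{[t,b]}\mu$ for all $t$, which is exactly the clause inserted into every theory during the construction in Lemma \ref{lemma1}. Finally I would read off an honest flow of time by taking the time points to be the members of $S$ ordered by $\sqsubset$, and define a spatial temporal signal $x(v,t,\iota)$ whose valuation at a node $v$ and a point $\Delta'\in S$ records the atomic propositions belonging to $\Delta'$; since no temporal operator occurs inside a spatial term, the spatial content of each $\Delta'$ can be realized on the fixed hierarchical graph uniformly in time, so the GSTL qualitative semantics agrees with membership in the theories.

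The step I expect to be the genuine obstacle is proving that the canonical order $\sqsubset$ is irreflexive, so that $(T,<)$ is a legitimate irreflexive partial order rather than merely a preorder --- the classical stumbling block in axiomatizing logics over irreflexive time flows, and the very reason the IRR rule was adopted. The point is that the IRR-theory condition forces every $\Delta'\in S$ to contain a witness $\mu\wedge\Box_{[a,b]}\neg\mu$ for some atom $\mu$ (with the $\Diamond$-nesting clause propagating such witnesses under the possibility operator), and this witness is incompatible with $\Delta'\sqsubset\Delta'$, giving irreflexivity. A secondary point is the connectedness requirement on $(T,<)$: because every temporal operator is bounded, only finitely many time indices are ever constrained by the formulas in $\Sigma$, so between comparable points one can always interpolate a finite $\sqsubset$-chain --- I would justify this by a filtration over the bounded horizon of the formulas involved, or simply by restricting attention to that finite horizon. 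Beyond these two points the proof is a routine Henkin-style argument, with Theorem \ref{theorem:soundness}, Lemma \ref{lemma1}, Lemma \ref{lemma2}, and Theorem \ref{theorem1} supplying everything else.
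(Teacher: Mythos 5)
Your proposal follows essentially the same route as the paper: one direction is dispatched by the soundness theorem, and the other by the canonical IRR model supplied by Theorem \ref{theorem1} (itself assembled from Lemma \ref{lemma1} and the truth lemma, Lemma \ref{lemma2}). In fact you unfold that machinery in more detail than the paper does --- in particular your explicit check that the canonical order $\sqsubset$ is irreflexive (via the $\mu\wedge\Box_{[a,b]}\neg\mu$ witnesses forced by the IRR-theory condition) and your remark on connectedness address points the paper's two-line proof leaves implicit.
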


\begin{proof}
According to the soundness Theorem \ref{theorem:soundness}, if $\Sigma$ has a model, then it is consistent. Then according to Theorem \ref{theorem1}, if $\Sigma$ is consistent, then it has an IRR model. Thus, the axiomatization system is complete.
\end{proof}

\begin{remark}
The compactness theorem that any consistent theory has a model if each of its finite subsets does is a result that holds for first-order logic, temporal logic with the flow of time $\mathbb{Q}$. But it fails in the temporal logic for $\mathbb{R}$, $\mathbb{N}$, and $\mathbb{Z}$ \cite{gabbay1990axiomatization}. Since strong completeness implies compactness, we only discuss weak completeness theorem in the cases of $\mathbb{N}$ and $\mathbb{Z}$ where given appropriate schemas, any consistent formula has a model with the appropriate flow of time  \cite{gabbay1990axiomatization}. The strong completeness is equal to frame completeness and compactness in universal modal logic. However, temporal logic in the flow of real-time has weak completeness, which has finitely complete and expressively complete but does not have compactness. 
\end{remark}

\section{Implementation of the automated reasoning} \label{Section: implementation}

There are different automated reasoning tasks for an intelligent system, such as deriving new formulas from a given set of formulas, checking the consistency of a set of formulas, updating a given set of formulas based on new information, and finding a minimal representation. Although these tasks seem different, they can be reformed into each other \cite{cohn2008qualitative}. As algorithms developed for one problem can be easily implemented to solve other problems, most of the research in automated reasoning focuses on the consistency checking problem. We demonstrate how do we solve a consistency checking problem using SAT based on the proposed GSTL formulas and the axiomatization system defined above.

Assume we have $n$ GSTL formulas $\Sigma=\{\phi_1,\phi_2,...,\phi_n\}$ whose truthfulness are known. We aim to check if $\Sigma$ is consistent (e.g., $\Sigma\not\vdash\bot$). This is done by assigning every spatial term a truth value. It has been shown in the proof of Theorem \ref{theorem:complexity}, any GSTL formulas can be reformulated as CNF. Following the SAT encoding procedure in the proof of Theorem \ref{theorem:complexity}, we first eliminate all temporal operators and then reformulate the GSTL formulas in $\Sigma$ in the CNF form $\wedge_{j}^{p_1}(\vee_{i}^{p_2}\mu_{j,i}^*)$ where $\mu_{j,i}^*$ is the spatial term and $p_1$ and $p_2$ are temporal parameters in GSTL formulas. It has been shown that every propositional formula can be converted into an equivalent formula that is in CNF. It is done by applying the axiomatization system in \eqref{axiom schemas1}-\eqref{inference rules}. Then we obtain a set of logic constraints for spatial terms $\mu_{j,i}^*$ whose truth value are to be assigned by the SAT solver. Thus, one can determine the truth value of the spatial terms $\mu_{j,i}^*$ by calling an SAT solver for $\wedge_{j}^{p_1}(\vee_{i}^{p_2}\mu_{j,i}^*)$. The algorithm is summarized in Algorithm \ref{algorithm:verfier}.

\begin{algorithm}
\SetAlgoLined
\SetKwInOut{Input}{input}\SetKwInOut{Output}{output}
\Input{A set of GSTL formula $\Sigma=\{\varphi_1,\varphi_2,...\varphi_n\}$ and inference rules}
\Output{Consistency of $\Sigma$}
\BlankLine

 \While{there are $\Box_{[a,b]}$, $\Diamond_{[a,b]}$ and $\sqcup_{[a,b]}^*$ operators in formulas of $\Sigma$}{
 For GSTL formula $\varphi=\Box_{[a,b]}\phi$, we have $\varphi=\wedge_{i=a}^b\phi_i$ where $\phi_i$ represents $\phi$ at time $i\in[a,b]$\;
 For GSTL formula $\varphi=\Diamond_{[a,b]}\phi$, we have $\varphi=\vee_{i=a}^b\phi_i$ where $\phi_i$ represents $\phi$ at time $i\in[a,b]$\;
 For GSTL formula $\varphi_1\sqcup_{[a,b]}^o\varphi_2$, we have $\wedge_{i=a}^{b}(\varphi_{1,i}\wedge\varphi_{2,i})\wedge \varphi_{1,a-1} \wedge \varphi_{2,b+1} \wedge \neg\varphi_{1,b+1} \wedge \neg\varphi_{2,a-1}$ (other IA relations can be transferred in the similar way)\;
 }
 Reform $\Sigma$ into the CNF form $\wedge_{j=1}^{p_j}(\vee_{i=1}^{p_i}\mu_{j,i}^*)$ using the inference rules in \eqref{axiom schemas1}-\eqref{inference rules} (e.g., double negation elimination P1 and De Morgan's laws P9 and P10, and spatial axiom schema S1-S6) where $\mu_{j,i}^*$ are the spatial terms defined in \eqref{complexity proof 1}\;
 Solve the SAT problem for $\wedge_{j=1}^{p_j}(\vee_{i=1}^{p_i}\mu_{j,i}^*)$ by assigning a set of truth value $u:\tau\rightarrow\{\top,\bot\}\in\mathbf{U}$ to each $\mu_{j,i}^*$\;
 
 \eIf{a feasible $u$ has been found}{
   Output $\Sigma$ is consistent\;
   }{
   Output $\Sigma$ is inconsistent \;
  }
 \caption{Implementation of the automated reasoning}
 \label{algorithm:verfier}
\end{algorithm}

We consider the following example in Fig. \ref{example of moving a cup to a plate} to illustrate the implementation of the automated reasoning where a hand places a cup on top of a plate. In \eqref{example of moving a cup to a plate}, $\varphi_1$ describes a hand grabs a cup. $\varphi_2$ describes a cup that is on top of a plate. $\varphi_3$ describes an empty plate. $\varphi_4$ describes the action of a hand, grabbing a cup and putting it on an empty plate. $\varphi_5$ describes the fact that a cup on a plate will always lead to the plate being non-empty. $\varphi_6$ states the fact that a plate is empty for some time. $\Sigma=\{\varphi_4,\varphi_5,\varphi_6\}$ describes an action of moving a cup onto an empty cup with constraints that need to be considered (e.g., the plate must be empty before one can put a cup on it). Now we want to check if $\Sigma$ with temporal parameters given in Eq. \eqref{example1} is consistent.

\begin{figure}[H]
    \centering
    \includegraphics[scale=0.5]{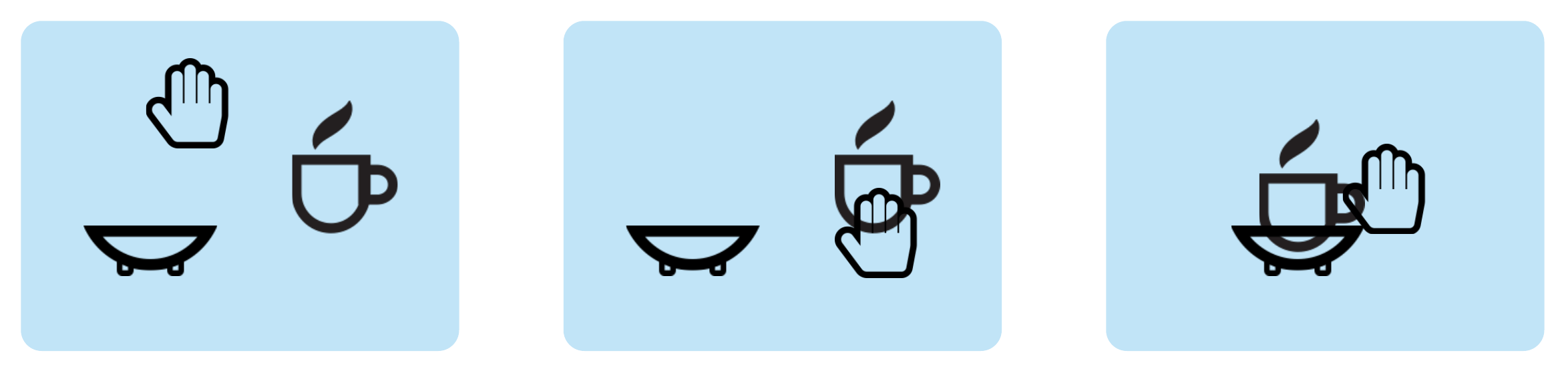}
    \caption{Example of moving a cup to a plate}
    \label{example of moving a cup to a plate}
\end{figure}

\begin{equation}
    \begin{aligned}
            &\varphi_1=\mathbf{C}_\exists(hand\wedge\mathbf{N}_\exists^{\left \langle *, *, * \right \rangle}cup)\\
            &\varphi_2=\mathbf{C}_\exists(cup\wedge\mathbf{N}_\exists^{\left \langle e, e, m \right \rangle}plate)\\
            &\varphi_3=\mathbf{C}_\exists(plate\wedge\mathbf{N}_\exists^{\left \langle e, e, m^{-1} \right \rangle}empty)\\
            &\varphi_4=\Box_{[10,20]}\varphi_1\sqcup_{[15,20]}^o\Box_{[15,25]}\varphi_2 \\
            &\varphi_5=\Box_{[0,+\infty)}\neg(\varphi_2\wedge\varphi_3)\\
            &\varphi_6=\Box_{[0,15]}\varphi_3
    \end{aligned}
    \label{example1}
\end{equation}
Following the procedure in Algorithm \ref{algorithm:verfier}, we first reformulate the GSTL formula in $\Sigma$ by eliminating all temporal operators. For example, $\varphi_4$ can be reformulated as
\begin{equation}
\begin{aligned}
\varphi_4=
    &\left(\bigwedge^{20}_{10}\varphi_1\right) 
    \left(\bigwedge^{20}_{15}(\varphi_1\wedge \varphi_2)\right)
    \left(\bigwedge^{25}_{15}\varphi_2\right).
\end{aligned}
\label{CNF1}
\end{equation}
As we can see from \eqref{CNF1}, temporal operators are replaced with conjunction operators with temporal parameters. To verify each clause in \eqref{CNF1}, we use the axiom schemas in \eqref{inference rules} (e.g. P9 and S1-S6 in this example) and obtain the following CNF form for $\varphi_1$.
\begin{equation}
\begin{aligned}
    &\varphi_1=\mathbf{C}_\exists(hand\wedge\mathbf{N}_\exists^{\left \langle *, *, * \right \rangle}cup)
    =\bigvee_{j=1}^{n_j}\left(\tau_j\wedge\upsilon_j\right)
    =\bigwedge\begin{pmatrix}
\tau_1\vee\tau_2\vee\cdots\vee\tau_{n_j-1}\vee\tau_{n_j} & \tau_1\vee\tau_2\vee\cdots\vee\tau_{n_j-1}\vee\upsilon_{n_j} &\\ 
\vdots&\vdots  &  \\ 
\tau_1\vee\upsilon_2\vee\cdots\vee\tau_{n_j-1}\vee\tau_{n_j} & \tau_1\vee\upsilon_2\vee\cdots\vee\tau_{n_j-1}\vee\upsilon_{n_j}&\\ 
\vdots&\vdots  &  \\  
\upsilon_1\vee\upsilon_2\vee\cdots\vee\upsilon_{n_j-1}\vee\tau_{n_j}  & \upsilon_1\vee\upsilon_2\vee\cdots\vee\upsilon_{n_j-1}\vee\upsilon_{n_j} &
\end{pmatrix}
\end{aligned}
\label{CNF2}
\end{equation}
where $\tau_j=\mathbf{C}_{A_j}hand$ and $\upsilon_j=\bigvee_{k=1}^{n_k}\mathbf{C}_{A_j}\mathbf{N}_{A_k}^{\left \langle *, *, * \right \rangle}cup$ whose truth value are to be assigned by the SAT solver. We can follow the same procedure to encode all other formula and obtain a set of formula in CNF $\Sigma=\{\wedge_{j=1}^{p_j}(\vee_{i=1}^{p_i}\mu_{j,i}^*)\}$. Then we use an SAT solver to assign a truth value to all spatial terms $\mu_{j,i}^*$.  We implement the simulation with an SAT solver in Python by treating each spatial term at a different time as a variable and choosing $n_j=n_k=1$ based on Fig. \ref{example of moving a cup to a plate}. We choose the maximum time length as 25. Following the SAT encoding procedure in Algorithm \ref{algorithm:verfier}, the resulting SAT problem has 150 variables. We have the following results from the SAT solver. For temporal parameters in \eqref{example1}, there is no feasible solution since the SAT solver cannot find a solution for spatial terms $\varphi_2$ and $\varphi_3$ at time 15. This can be easily verified in \eqref{example1} since $\varphi_2$ is true at time 15, according to $\varphi_4$ and $\varphi_3$ is true at time 15 according to $\varphi_6$; however, $\varphi_5$ requires they cannot both be true at the same time. If we change $\varphi_6$ to $\varphi_6=\Box_{[0,14]}\varphi_3$, then the SAT solver can return 16384 feasible solutions which means $\Sigma$ is consistent.

\section{Conclusion}\label{Section: conclusion}
Motivated by the importance of knowledge representation and automated reasoning and the limitation in existing spatial temporal logics, we propose a new graph-based spatial temporal logic by proposing a new ``until" temporal operator and three spatial operators with interval algebra extension. The satisfiability problem of the proposed GSTL is decidable. A Hilbert style inference system is given with a set of axiom schemas and two inference rules. We prove that the corresponding axiomatization system is sound and complete, and automated reasoning can be implemented through an SAT solver.


\bibliography{mybibfile}

\end{document}